\begin{document}
\bibliographystyle{plain}
\newtheorem{theo}{Theorem}[section]
\newtheorem{lemme}[theo]{Lemma}
\newtheorem{cor}[theo]{Corollary}
\newtheorem{defi}[theo]{Definition}
\newtheorem{prop}{Proposition}
\newtheorem{problem}[theo]{Problem}
\newtheorem{remarque}[theo]{Remark}
\newtheorem{claim}[theo]{Claim}
\newcommand{\beq}{\begin{eqnarray}}
\newcommand{\enq}{\end{eqnarray}}
\newcommand{\be}{\begin{eqnarray*}}
\newcommand{\en}{\end{eqnarray*}}
\newcommand{\ben}{\begin{eqnarray*}}
\newcommand{\enn}{\end{eqnarray*}}
\newcommand{\Td}{\mathbb T^d}
\newcommand{\Rd}{\mathbb R^n}
\newcommand{\R}{\mathbb R}
\newcommand{\N}{\mathbb N}
\newcommand{\Sn}{\mathbb S}
\newcommand{\Zd}{\mathbb Z^d}
\newcommand{\Linf}{L^{\infty}}
\newcommand{\dt}{\partial_t}
\newcommand{\Dt}{\frac{d}{dt}}
\newcommand{\Dtt}{\frac{d^2}{dt^2}}
\newcommand{\demi}{\frac{1}{2}}
\newcommand{\vf}{\varphi}
\newcommand{\epu}{_{\epsilon}}
\newcommand{\ep}{^{\epsilon}}
\newcommand{\bfi}{{\mathbf \Phi}}
\newcommand{\bpsi}{{\mathbf \Psi}}
\newcommand{\bx}{{\mathbf x}}
\newcommand{\bX}{{\mathbf X}}
\newcommand{\ds}{\displaystyle}
\newcommand {\g}{\`}
\newcommand{\E}{\mathbb E}
\newcommand{\Q}{\mathbb Q}
\newcommand{\PP}{\mathbb P}
\newcommand{\1}{\mathbb I}
\let\cal=\mathcal

\title{Forecasting trends with asset prices}
\maketitle
\begin{center}
Ahmed Bel Hadj Ayed\footnotemark[1]$^{,}$\footnotemark[2], Gr\'egoire Loeper \footnotemark[2], Fr\'ed\'eric Abergel \footnotemark[1] 
\end{center}
\begin{abstract}
In this paper, we consider a stochastic asset price model where the trend is an unobservable Ornstein Uhlenbeck process.
We first review some classical results from Kalman filtering. Expectedly, the choice of the parameters is crucial to put it into practice. For this purpose, we obtain the likelihood in closed form, and provide two on-line computations of this function. Then, we investigate the asymptotic behaviour of statistical estimators. Finally, we quantify the effect of a bad calibration with the continuous time mis-specified Kalman filter. Numerical examples illustrate the difficulty of trend forecasting in financial time series. 
\end{abstract}
\footnotetext[1]{Chaire of quantitative finance, laboratory MAS, CentraleSup\'elec}
\footnotetext[2]{BNP Paribas Global
Markets}

\section*{Motivations}
Asset prices are well described by random walks. The underlying economic foundation, introduced in 1900 by Louis Bachelier (see \cite{Bachelier}), is that, in an efficient market, price changes reflect new information and hence approximate a random variation. The first analytical and realistic approach was proposed in 1959 by Osborne (see \cite{Osborne}), who modelled returns as an ordinary random walk. In this case, future returns are not predictable. Nevertheless, people have divergent views on the subject and trend following strategies are the principal sources of returns for Commodity Trading Advisors (see \cite{Bouch}). 
Furthermore, most quantitative strategies are based on the assumption of trends extracted from the asset prices (see \cite{Markov}, \cite{Merton}). This component contains information about global changes and is useful for prediction. 
This estimation is a very difficult problem because of a high measurement noise. Consider for example the simple model with a constant trend $\frac{dS_{t}}{S_{t}}=\mu dt+\sigma_{S} dW_{t}^{S}$: the best trend estimate at time $t$ is $\widehat{\mu}_t=\frac{1}{t}\int_{0}^{t}\frac{dS_{s}}{S_{s}}$. The Student's t-test will reject the hypothesis $\mu=0$ at time $T$ if $|\widehat{\mu}_T| > \frac{q_\alpha \sigma_{S}}{\sqrt{T}} $, with $q_\alpha>1$ ($q_\alpha=1.96$ for a significance level equal to $\alpha=5\%$).
For instance, with $\sigma_{S}=30\%$, the estimate $\widehat{\mu}_T=1\%$ is  statistically relevant if $T > q_\alpha^{2}\ 900$ years. 
Generally, unobserved stochastic trends are considered and filtering methods are used (see \cite{LaknerStrat}, \cite{PhamStrat},\cite{Lions} or \cite{PortfolioBrendle}). Most of these filters introduce a parametric trend model. Therefore, these methods are confronted to the choice of the parameters.
Based on the Bayesian approach (the Kalman filter, the Extended Kalman filter or the Particle filter for example) or on the maximum likelihood estimation (see \cite{MLEstimMarkov},\cite{CAPPE}, \cite{ParamsEstimation}, \cite{ParamsEstimationINRIA} or \cite{ParamsEstimParticle}), several studies on the inference of hidden processes have been done in the past.
Many researchers applied these methods on financial time series, but they often focus on stochastic volatility processes (see \cite{vol1},\cite{vol2},\cite{vol3} or \cite{vol4}). 

The purpose of this work is to assess the feasibility of forecasting trends modeled by an unobserved mean-reverting diffusion.

The paper is organized as follows: the first section presents the model and recalls some results from Kalman filtering. 

The second section is devoted to the inference of the parameters with discrete time observations. Since all statistical estimators are based on the likelihood, two recursive computations of this function are presented.
Based on the Valentine Genon-Catalot's results (see \cite{LectureA} for details), the performance of statistical estimators is evaluated by giving their asymptotic behaviours, and by providing, in closed form, the Cramer Rao bound. 

In the third section, we introduce the continuous time mis-specified Kalman filter (like in \cite{MisFilter}, where the authors consider the case of an asset whose trend changes at
unknown random times), which takes into account a bad calibration of the parameters. First, the impact of parameters mis-specification on trend filtering is quantified with the law of the residuals between the filter (mis-specified or not) and the hidden process. Then, we derive the probability to have a positive trend, knowing a positive estimate. Due to the non-zero correlation between the trend and the filter, this probability is always superior to $0.5$.

Finally, numerical examples illustrate the difficulty of this calibration and show that the impact of parameters mis-specification on trend filtering is not negligible.

\section{Framework}
This section begins by introducing the model, which corresponds to an unobserved mean-reverting diffusion. After discretizing the model, we recall the Kalman filtering method.
\subsection{Model}
\subsubsection{Continuous time model}
Consider a financial market living on a stochastic basis $( \Omega , \mathcal{F} ,\mathbf{F}, \mathbb{P} )$, where  $\mathbf{F}=\left\lbrace  \mathcal{F}_{t} \right\rbrace$ is a filtration on $\Omega$, and $\mathbb{P}$ is the objective probability measure. Suppose that the dynamics of the risky asset $S$ is given by:
\begin{eqnarray}\label{Model}
\frac{dS_{t}}{S_{t}}&=&\mu_{t}dt+\sigma_{S} dW_{t}^{S},\\
d\mu_{t}&=&-\lambda_{\mu}\mu_{t}dt+\sigma_{\mu}dW_{t}^{\mu},
\end{eqnarray}
with $W^{S}$ and $W^{\mu}$ two uncorrelated Wiener processes and $\mu_{0}=0$. We also assume that $\left( \lambda_{\mu},\sigma_{\mu},\sigma_{S}\right) \in  \mathbb{R}_+^{*}\times\mathbb{R}_+^{*}\times\mathbb{R}_+^{*}$ and that the trend $\mu$ and the Brownian motion $W^{S}$ are $\mathbf{F}$-adapted. 
\begin{remarque}
Let $\mathbf{F}^{S}=\left\lbrace  \mathcal{F}_{t}^S \right\rbrace$ be the augmented filtration generated by the price process $S$. Only $\mathbf{F}^{S}$-adapted processes are observable, which implies that agents in this market do not observe the trend $\mu$.
\end{remarque}
\subsubsection{Discrete time model}
Let $\delta$ be a discrete time step. To simplify the notation, $k$ is used for $t_{k}=k\delta$. The discrete time model is:
\begin{eqnarray}\label{Equation Kalman}
y_{k+1}=\frac{S_{k+1}-S_{k}}{S_{k}\delta}&=& \mu_{k+1} + u_{k+1},\\
\mu_{k+1}&=&e^{-\lambda_{\mu} \delta} \mu_{k}+v_{k},
\end{eqnarray} 
where $u_{k}\sim \mathcal{N}\left( 0,\frac{\sigma_{S}^{2}}{\delta}\right)$ and $v_{k}\sim \mathcal{N}\left( 0,\frac{\sigma_{\mu}^{2}}{2\lambda_{\mu}}\left( 1- e^{-2 \lambda_{\mu} \delta} \right) \right)$.
The system \hyperref[Equation Kalman]{(3)-(4)} corresponds to an AR$(1)$ plus noise model.
\subsection{Optimal trend estimator} 
\subsubsection{Discrete Kalman filter}
In this sub section, the parameters $\theta=\left(\lambda_{\mu} ,\sigma_{\mu}\right)$ and $\sigma_{S}$ are supposed to be known. The discrete time system \hyperref[Equation Kalman]{(3)-(4)} corresponds to a Linear Gaussian Space State model where the observation is $y$ and the state $\mu$ (see \cite{ARMABook} for details). In this case, the Kalman filter gives the optimal estimator, which corresponds to the conditional expectation $\mathbb{E} \left[\mu_k|y_{1},...,y_{k}\right]$.
In the following, to simplify the notation, $\widehat{X}_{k/l}$ represents $\mathbb{E}\left[X_k|y_{1},...,y_{l}\right]$. Appendix \hyperref[sec::KalmanFilter]{A} presents a detailed introduction to the discrete Kalman filter. This filter contains two distinct phases:
\begin{enumerate}
\item An $a\ priori$ estimate given $\hat{\mu}_{k+1/k}$ and $\Gamma_{k+1/k}=\mathbb{E} \left[ (\mu_{k+1}- \right. $ $ \left.\hat{\mu}_{k+1/k})(\mu_{k+1}- \hat{\mu}_{k+1/k})^{T}\right]$. This estimate is done using the transition equation \hyperref[Equation Kalman]{$\left(4\right) $}.
\item An $a\ posteriori$ estimate. When the new observation is available, a correction of the first estimate is done to obtain $\hat{\mu}_{k+1/k+1}$ and $\Gamma_{k+1/k+1}=
\mathbb{E} \left[ (\mu_{k+1}-\hat{\mu}_{k+1/k+1})(\mu_{k+1} -\hat{\mu}_{k+1/k+1})^{T}\right]$. The criterion for this correction is the least squares method. 
\end{enumerate}
Thus, $\hat{\mu}_{k/k}$ is the minimum variance linear unbiased estimate of the trend $\mu_k$. To this end, it is the optimal recursive state estimate for a Linear Gaussian model.
Formally, the iterative method is given by:
\begin{eqnarray}
\widehat{\mu}_{k+1/k+1}&=&e^{-\lambda_{\mu} \delta} \widehat{\mu}_{k/k}+K_{k+1} \left( y_{k+1} -  e^{-\lambda_{\mu} \delta} \widehat{\mu}_{k/k} \right),\label{KalmanRecursion1}\\
\Gamma_{k+1/k+1}&=&\left( 1- K_{k+1} \right) \Gamma_{k+1/k},\label{KalmanRecursion2}
\end{eqnarray}
with 
\begin{eqnarray*}
K_{k+1}&=&\frac{\Gamma_{k+1/k}}{\Gamma_{k+1/k}+\frac{\sigma_{S}^{2}}{\delta}},\\
\Gamma_{k+1/k}&=&e^{-2 \lambda_{\mu} \delta}\Gamma_{k/k}+\frac{\sigma_{\mu}^{2}}{2\lambda_{\mu}}\left( 1- e^{-2 \lambda_{\mu} \delta} \right).
\end{eqnarray*}
\subsubsection{Stationary limit and continuous time representation}
Looking for $\Gamma_{k+1/k+1}=\Gamma_{k/k}$ (which corresponds to the steady-state regime), we find:
\begin{eqnarray*}
\Gamma_{\infty}&=&\frac{g\left(\sigma_{S}, \lambda_{\mu},\sigma_{\mu}\right)-f\left(\sigma_{S}, \lambda_{\mu},\sigma_{\mu}\right) }{2e^{-2 \lambda_{\mu} \delta}},\\
\text{where } f\left(\sigma_{S}, \lambda_{\mu},\sigma_{\mu}\right) &=& \left(\frac{\sigma_{S}^{2}}{\delta}+ \frac{\sigma_{\mu}^{2}}{2\lambda_{\mu}}\right)\left( 1- e^{-2 \lambda_{\mu} \delta} \right),\\
\text{and } g\left(\sigma_{S}, \lambda_{\mu},\sigma_{\mu}\right)&=&\sqrt{f\left(\sigma_{S}, \lambda_{\mu},\sigma_{\mu}\right) ^{2}+\frac{2 \sigma_{S}^{2} \sigma_{\mu}^{2}}{\lambda_{\mu}\delta}\left( e^{-2 \lambda_{\mu} \delta}- e^{-4 \lambda_{\mu} \delta} \right)}.
\end{eqnarray*}
Using the stationary covariance error $\Gamma_{\infty}$, a stationary gain $K_{\infty}$ is defined  and the estimate can be rewritten as a corrected exponential average: 
\begin{eqnarray}
\widehat{\mu}_{n+1}=K_{\infty} \sum_{i=0}^{\infty} e^{-\lambda_{\mu} \delta i} \left( 1 - K_{\infty} \right)^{i}   y_{n+1-i},
\end{eqnarray}
The steady-state Kalman filter has also a continuous time limit depending on the asset returns:
\begin{prop}\label{ContinuousEstimateProposition}
\begin{eqnarray}\label{ContinuousEstimate}
d \widehat{\mu}_{t}=-\lambda_{\mu} \beta\left( \lambda_{\mu},\sigma_{\mu},\sigma_{S}\right) \widehat{\mu}_{t}dt+\lambda_{\mu}\left( \beta\left( \lambda_{\mu},\sigma_{\mu},\sigma_{S}\right) -1 \right) \frac{dS_{t}}{S_{t}}, 
\end{eqnarray}
where
\begin{eqnarray}\label{BetaFormula}
\beta\left( \lambda_{\mu},\sigma_{\mu},\sigma_{S}\right) = \left( 1+ \frac{\sigma_{\mu}^{2}}{ \lambda_{\mu}^{2} \sigma_{S}^{2}}  \right)^{\frac{1}{2}}.
\end{eqnarray}
\end{prop}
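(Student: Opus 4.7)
The plan is to derive the continuous-time limit by taking $\delta\to 0$ in the steady-state discrete recursion \eqref{KalmanRecursion1}, with the stationary gain $K_\infty = \Gamma_\infty/(\Gamma_\infty + \sigma_S^2/\delta)$ substituted in place of $K_{k+1}$.

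First I would compute the $\delta\to 0$ asymptotics of $\Gamma_\infty$. Since $f$ and $g$ both converge to $2\lambda_\mu \sigma_S^2$ while the ratio $(g-f)/\delta$ is what matters, the trick is to rewrite $g-f = (g^2-f^2)/(g+f)$ so as to avoid a dangerous subtraction. Using $e^{-2\lambda_\mu\delta}(1-e^{-2\lambda_\mu\delta}) = 2\lambda_\mu\delta + O(\delta^2)$ one obtains
\begin{equation*}
g^2 - f^2 \;=\; \frac{2\sigma_S^2\sigma_\mu^2}{\lambda_\mu\delta}\bigl(e^{-2\lambda_\mu\delta}-e^{-4\lambda_\mu\delta}\bigr) \;\longrightarrow\; 4\sigma_S^2\sigma_\mu^2,
\end{equation*}
and $g+f \to 4\lambda_\mu\sigma_S^2$, so that
\begin{equation*}
\Gamma_\infty \;\longrightarrow\; \frac{g-f}{2} \;=\; \frac{g^2-f^2}{2(g+f)} \;\longrightarrow\; \frac{\sigma_\mu^2}{2\lambda_\mu} \cdot \frac{1}{\beta+1} \cdot (\text{rearranging})\;=\; \lambda_\mu \sigma_S^2 (\beta-1),
\end{equation*}
where I use the algebraic identity $(\beta^2-1)/(\beta+1) = \beta-1$ with $\beta$ defined by \eqref{BetaFormula}.

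Next I would deduce $K_\infty$. Since $\sigma_S^2/\delta \to \infty$ while $\Gamma_\infty$ stays bounded, one has $K_\infty = \delta\, \Gamma_\infty/\sigma_S^2 + o(\delta)$, so $K_\infty \sim \lambda_\mu(\beta-1)\,\delta$. Rewriting \eqref{KalmanRecursion1} as an increment,
\begin{equation*}
\widehat{\mu}_{k+1/k+1} - \widehat{\mu}_{k/k} = \bigl((1-K_\infty)e^{-\lambda_\mu\delta}-1\bigr)\widehat{\mu}_{k/k} + K_\infty\, y_{k+1},
\end{equation*}
I would Taylor expand the bracket as $-\lambda_\mu\delta - K_\infty + O(\delta^2) = -\lambda_\mu\beta\,\delta + o(\delta)$, and observe that $K_\infty\, y_{k+1} = K_\infty\,(S_{k+1}-S_k)/(S_k\delta) = \lambda_\mu(\beta-1)\,(S_{k+1}-S_k)/S_k + o(1)$ per step. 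Dividing by $\delta$ and passing formally to the limit yields precisely \eqref{ContinuousEstimate}.

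The main obstacle, technically, is the first step: the naive expansion of $\Gamma_\infty$ produces a $0/0$ indeterminacy that swallows the very term one wants to retain, and one must either rationalise through $(g^2-f^2)/(g+f)$ or equivalently solve the continuous-time algebraic Riccati equation $2\lambda_\mu \Gamma_\infty + \Gamma_\infty^2/\sigma_S^2 = \sigma_\mu^2$ (obtained as the $\delta\to 0$ limit of the steady-state fixed-point equation), whose positive root is $\Gamma_\infty^{\mathrm{cont}} = \lambda_\mu\sigma_S^2(\beta-1)$. A cleaner proof is probably to start from this Riccati equation directly and then apply the standard continuous-time Kalman-Bucy filter formula, but the computation above makes explicit the link with the discrete recursion \eqref{KalmanRecursion1}--\eqref{KalmanRecursion2}.
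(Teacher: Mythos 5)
Your strategy --- passing to the $\delta\to 0$ limit in the steady-state discrete recursion (\ref{KalmanRecursion1})--(\ref{KalmanRecursion2}) --- is genuinely different from the paper's proof, which instead quotes Lakner's closed-form continuous-time filter, solves the stationary Riccati equation $-P^2/\sigma_S^2-2\lambda_\mu P+\sigma_\mu^2=0$ to obtain $P^\infty=\lambda_\mu\sigma_S^2\left(\beta-1\right)$, and differentiates the resulting explicit representation; that is exactly the ``cleaner proof'' you sketch in your closing paragraph. Your discrete-limit route is viable, and its second half is sound: given $\Gamma_\infty\to\lambda_\mu\sigma_S^2\left(\beta-1\right)$, the expansions $K_\infty\sim\lambda_\mu\left(\beta-1\right)\delta$ and $\left(1-K_\infty\right)e^{-\lambda_\mu\delta}-1=-\lambda_\mu\beta\,\delta+o\left(\delta\right)$ are correct and do produce Equation (\ref{ContinuousEstimate}).

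The problem is your first step, which is wrong as written. It is not true that $f$ and $g$ share the limit $2\lambda_\mu\sigma_S^2$: since $g^2-f^2\to 4\sigma_S^2\sigma_\mu^2>0$ while both quantities stay bounded, their limits must differ. Indeed $g\to\sqrt{4\lambda_\mu^2\sigma_S^4+4\sigma_S^2\sigma_\mu^2}=2\lambda_\mu\sigma_S^2\beta$, hence $g+f\to 2\lambda_\mu\sigma_S^2\left(\beta+1\right)$, not $4\lambda_\mu\sigma_S^2$. Two consequences. First, the ``$0/0$ indeterminacy'' you present as the main technical obstacle does not exist: $g-f\to 2\lambda_\mu\sigma_S^2\left(\beta-1\right)$ directly, so $\Gamma_\infty=\left(g-f\right)/\left(2e^{-2\lambda_\mu\delta}\right)\to\lambda_\mu\sigma_S^2\left(\beta-1\right)$ with no rationalisation (and no division by $\delta$) needed. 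Second, and more seriously, if one follows your stated limits literally, your chain gives $\Gamma_\infty\to\frac{4\sigma_S^2\sigma_\mu^2}{2\cdot 4\lambda_\mu\sigma_S^2}=\frac{\sigma_\mu^2}{2\lambda_\mu}=\frac{1}{2}\lambda_\mu\sigma_S^2\left(\beta^2-1\right)$, which differs from the correct value by the factor $\left(\beta+1\right)/2$; this discrepancy is hidden inside the unexplained ``(rearranging)''. Since the entire content of the proposition is the precise $\beta$-dependence of the coefficients, this is not a harmless slip: the argument only closes after replacing $g+f\to 4\lambda_\mu\sigma_S^2$ by $g+f\to 2\lambda_\mu\sigma_S^2\left(\beta+1\right)$, after which $\frac{g^2-f^2}{2\left(g+f\right)}\to\frac{4\sigma_S^2\sigma_\mu^2}{4\lambda_\mu\sigma_S^2\left(\beta+1\right)}=\frac{\sigma_\mu^2}{\lambda_\mu\left(\beta+1\right)}=\lambda_\mu\sigma_S^2\left(\beta-1\right)$, using $\sigma_\mu^2=\lambda_\mu^2\sigma_S^2\left(\beta^2-1\right)$. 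With that correction your proof is a legitimate alternative to the paper's, and it has the merit of exhibiting the link between the discrete gain and the continuous coefficients; the paper's Riccati route buys brevity and avoids the asymptotic bookkeeping altogether.
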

\begin{proof}
Based on \cite{LaknerStrat}, the Kalman filter is given by:
\begin{eqnarray*}
E\left[  \mu_t | \mathcal{F}^S_t\right]&=&\phi\left( t\right) \left(\widehat{\mu}_{0}+\frac{1}{\sigma_{S}^{2}}\int_{0}^{t}\frac{P\left( u\right) }{\phi\left( u\right)} \frac{dS_{u}}{S_{u}} \right),\\
\phi\left( t\right)&=&e^{-\lambda_{\mu} t-\frac{1}{\sigma_{S}^{2}} \int_{0}^{t}P\left( u\right) du},
\end{eqnarray*}
where the estimation error variance $P$ is the solution of the following Riccati equation:
\begin{eqnarray*}
P'\left( t\right)=\frac{-1}{\sigma_{S}^{2}}P\left( t\right)^2-2\lambda_{\mu} P\left( t\right)+\sigma_{\mu}^{2}.
\end{eqnarray*}
In this steady-state regime, we have $P'\left( t\right)=0$. Then, the positive solution of this equation is given by:
\begin{eqnarray*}
P^\infty=\sigma_{S}^{2}\lambda_{\mu}\left(\beta\left( \lambda_{\mu},\sigma_{\mu},\sigma_{S}\right)-1 \right),
\end{eqnarray*}
and the steady-state Kalman filter follows:
\begin{eqnarray*}
\widehat{\mu}_{t}&=&\phi^\infty\left( t\right) \left(\widehat{\mu}_{0}+\frac{1}{\sigma_{S}^{2}}\int_{0}^{t}\frac{P^\infty }{\phi^\infty\left( u\right)} \frac{dS_{u}}{S_{u}} \right),\\
\phi^\infty\left( t\right)&=&e^{- \lambda_{\mu} \beta\left( \lambda_{\mu},\sigma_{\mu},\sigma_{S}\right) t}.
\end{eqnarray*}
Since:
\begin{eqnarray*}
\frac{d\phi^\infty\left( t\right)}{\phi^\infty\left( t\right)}=-\lambda_{\mu} \beta\left( \lambda_{\mu},\sigma_{\mu},\sigma_{S}\right)dt,
\end{eqnarray*}
the steady-state Kalman filter satisfies the stochastic differential equation (\ref{ContinuousEstimate}).
\end{proof}
This continuous representation can be used for risk return analysis of trend following strategies (see \cite{Lyxor} for details).
The Kalman filter is the optimal estimator for linear systems with Gaussian uncertainty. In practice, the parameters $\theta=\left(\lambda_{\mu},\sigma_{\mu}\right)$ are unknown and must be estimated.

\section{Inference of the trend parameters}
In this section, the problem of the parameters inference is treated. Based on discrete time observations, two classes of methods can be considered. The first one is based on backtesting. Each set of parameters defines one trend estimator and can be applied to several trading strategies. Backtests can be used in order to choose the parameters but even if several in- and out-of-sample periods are used, it will not ensure a good fit of the model. An alternative and more rigorous way exists: the use of statistical estimators. For example, Maximum Likelihood and Bayesian estimators have good properties (like consistency). To this end, this second approach is considered. For the model \hyperref[Equation Kalman]{(3)-(4)}, Peter Lakner (see \cite{Lakner}) and Ofer Zeitouni (see \cite{Zeitouni}) develop methods based on the algorithm EM in order to attain the maximum likelihood estimator but Fabien Campillo and François Le Gland suggest that direct maximization should be preferred in some cases (see \cite{MleVsEm} for details). Since the Bayesian estimators are also based on the likelihood, we present two on-line computations of this function. Using the Valentine Genon-Catalot's results (see \cite{LectureA} for details), we close this section by analysing the asymptotic behaviours of statistical estimators and by providing the Cramer Rao bound in closed form.

\subsection{Likelihood Computation}
The likelihood can be computed using two methods. The first one is based on a direct calculus while the second method uses the Kalman filter.
\subsubsection{Direct computation of the likelihood}
A first approach is to directly compute the likelihood. The vectorial representation of the discrete time model \hyperref[Equation Kalman]{(3)-(4)} is:
\begin{eqnarray*}
\left(\begin{array}{cc}y_1\\\vdots\\y_N\\\end{array}\right)=\left(\begin{array}{cc}\mu_1\\\vdots\\
\mu_N\\\end{array}\right)+\left(\begin{array}{cc}
u_1\\\vdots\\u_N\\\end{array}\right),
\end{eqnarray*}
where $\left( \mu_1,\cdots,\mu_N\right)^T$ and $\left( u_1,\cdots,u_N\right)^T$, knowing $\theta=\left(\sigma_{\mu},\lambda_{\mu} \right) $, are two independent Gaussian processes. Therefore the vector $\left( y_1,\cdots,y_N\right)^T$, knowing $\theta$, is also a Gaussian process. The likelihood is then characterized by the mean $M_{y_{1:N}|\theta}$ and the covariance $\Sigma_{y_{1:N}|\theta}$:
\begin{eqnarray}\label{Cov}
M_{y_{1:N}|\theta}&=& 0\text{ } \left( \mu_{0}=0 \text{ is supposed}\right) ,\\
\Sigma_{y_{1:N}|\theta}&=&\Sigma_{\mu_{1:N}|\theta}+\Sigma_{u_{1:N}|\theta},
\end{eqnarray} 
where $\Sigma_{u_{1:N}|\theta}=\frac{\sigma_{S}^{2}}{\delta}I_N$ and $\Sigma_{\mu_{1:N}|\theta}=\left( \mathbb{C}ov\left( \mu_{t},\mu_{s}\right)  \right)_{1\leq t,s \leq N}$. Since the drift $\mu$ is an Ornstein Uhlenbeck process, then:
\begin{eqnarray}\label{OuCov}
\mathbb{C}ov\left( \mu_{t},\mu_{s}\right)&=&\frac{\sigma_{\mu}^{2}}{2\lambda_{\mu}} e^{-\lambda_{\mu}\left(s+t\right) }\left(e^{2\lambda_{\mu} s \wedge t}-1 \right).
\end{eqnarray} 
Finally, the likelihood is given by:
\begin{eqnarray}\label{Likelihood}
f\left(y_{1},...y_{N} | \theta \right)=&\frac{1}{\left( 2\pi\right)^{N/2} \sqrt{det\Sigma_{y_{1:N}|\theta}} }e^{\left(  \frac{-1}{2} \left(y_{1},...,y_{N} \right) \Sigma_{y_{1:N}|\theta}^{-1} \left(y_{1},...,y_{N} \right)^{T} \right)}. 
\end{eqnarray} 
\begin{remarque}
When the dimension $N$ is large, it is extremely difficult to directly invert the covariance matrix $\Sigma_{y_{1:N}|\theta}$ and to compute the determinant of this matrix. An iterative approach can be used instead and the details of which are given in Appendix \hyperref[sec::IterativeLikelihood]{B}.
\end{remarque}
\subsubsection{Computation of the likelihood using the Kalman filter}
The likelihood can also be evaluated via the prediction error decomposition (see \cite{KalmanML} for details):
\begin{eqnarray*}
f\left(y_{1},...y_{N}| \theta \right)&=&f\left(y_{N}| y_{1},...y_{N-1},\theta \right)f\left( y_{1},...y_{N-1}|\theta \right).\\
&=&\prod_{n=1}^{N}f\left(y_{n}| y_{1},...y_{n-1},\theta \right),
\end{eqnarray*}
and the conditional laws are given by the following proposition:
\begin{prop}
The process $\left(y_{n}| y_{1},...y_{n-1},\theta\right)$ is a Gaussian:
\begin{eqnarray*}
\left(y_{n}| y_{1},...y_{n-1},\theta\right)& \sim & \mathcal{N}\left( M_{y_{n|n-1}},\mathbb{V}ar_{y_{n|n-1}}\right),
\end{eqnarray*}
and
\begin{eqnarray*}
M_{y_{n|n-1}}&=&e^{-\lambda \delta} \hat{\mu}_{n-1/n-1},\\
\mathbb{V}ar_{y_{n|n-1}}&=& e^{-2\lambda \delta}\Gamma_{n-1/n-1}+\frac{\sigma_{\mu}^{2}}{2\lambda}\left(1- e^{-2\lambda \delta}\right) +\frac{\sigma_{S}^{2}}{\delta}.
\end{eqnarray*}
The $a\ posteriori$ estimate of the trend $\hat{\mu}_{n-1/n-1}$ and the covariance error $\Gamma_{n-1/n-1}$ are given by Kalman filtering (see Equations (\ref{KalmanRecursion1}) and (\ref{KalmanRecursion2})). 
\end{prop}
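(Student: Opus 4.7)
The plan is to exploit the joint Gaussianity of the linear state-space system and to reduce everything to the a priori step of the Kalman filter that has already been written down. Conditionally on $\theta$, the entire sequence $(\mu_k,y_k)_{k\ge 0}$ is a linear transformation of the independent Gaussian innovations $u_k,v_k$ and of the deterministic initial condition $\mu_0=0$, so it is jointly Gaussian. Hence every conditional law extracted from this system is again Gaussian; in particular $(y_n\mid y_1,\ldots,y_{n-1},\theta)$ is Gaussian, and it suffices to identify its first two moments.

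For the conditional mean, I would start from the observation equation $y_n=\mu_n+u_n$ and write
\[
\E\bigl[y_n\mid y_{1:n-1},\theta\bigr]=\E\bigl[\mu_n\mid y_{1:n-1},\theta\bigr]+\E\bigl[u_n\mid y_{1:n-1},\theta\bigr].
\]
The innovation $u_n$ is independent of $y_1,\ldots,y_{n-1}$ with zero mean, so the second term vanishes. Substituting the state equation $\mu_n=e^{-\lambda\delta}\mu_{n-1}+v_{n-1}$ and using that $v_{n-1}$ is independent of the past observations yields
\[
\E\bigl[\mu_n\mid y_{1:n-1},\theta\bigr]=e^{-\lambda\delta}\,\widehat{\mu}_{n-1/n-1},
\]
which is the claimed value of $M_{y_{n|n-1}}$.

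For the conditional variance, the same decomposition and the independence of $u_n$ from $y_{1:n-1}$ give
\[
\mathbb{V}\mathrm{ar}\bigl(y_n\mid y_{1:n-1},\theta\bigr)=\mathbb{V}\mathrm{ar}\bigl(\mu_n\mid y_{1:n-1},\theta\bigr)+\tfrac{\sigma_S^2}{\delta}.
\]
Applying the state equation once more and again using that $v_{n-1}$ is independent of $y_{1:n-1}$ with variance $\tfrac{\sigma_\mu^2}{2\lambda}(1-e^{-2\lambda\delta})$, the state term splits as
\[
\mathbb{V}\mathrm{ar}\bigl(\mu_n\mid y_{1:n-1},\theta\bigr)=e^{-2\lambda\delta}\,\Gamma_{n-1/n-1}+\tfrac{\sigma_\mu^2}{2\lambda}\bigl(1-e^{-2\lambda\delta}\bigr),
\]
which is precisely the a priori covariance $\Gamma_{n/n-1}$ already recorded in the Kalman recursion of Section~1. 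Summing the two contributions yields the stated expression for $\mathbb{V}\mathrm{ar}_{y_{n|n-1}}$.

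There is no real obstacle: the result is a direct rewriting of the a priori step of the discrete Kalman filter, once one observes that the Gaussianity of the conditional law is automatic from the linear-Gaussian structure. The only point to be careful about is to invoke the independence of the two innovation sequences from the $\sigma$-algebra generated by $y_{1:n-1}$; this follows because $u_n$ and $v_{n-1}$ are built from increments of the driving Brownian motions strictly after time $t_{n-1}$, while $y_{1:n-1}$ is measurable with respect to the past.
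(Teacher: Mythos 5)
Your proof is correct and follows essentially the same route as the paper's: both reduce the claim to the a priori step of the Kalman recursion by splitting $y_n=\mu_n+u_n$ and $\mu_n=e^{-\lambda\delta}\mu_{n-1}+v_{n-1}$, using that the noises are centered and independent of the past observations. The only difference is that you spell out the joint-Gaussianity and independence justifications that the paper leaves implicit, which is a matter of detail rather than of method.
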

\begin{proof}
Since the process $y_{n}$ is a Gaussian, the process $\left(y_{n}| y_{1},...y_{n-1},\theta\right)$ is also Gaussian. Moreover, using Equations \hyperref[Equation Kalman]{$(3)-(4)$}, we have:
\begin{eqnarray*}
M_{y_{n|n-1}}&=&\hat{\mu}_{n/n-1}+0,\\
\hat{\mu}_{n/n-1}&=& e^{-\lambda \delta} \hat{\mu}_{n-1/n-1}+0,
\end{eqnarray*}
and
\begin{eqnarray*}
\mathbb{V}ar_{y_{n|n-1}}&=& \Gamma_{n/n-1} +\frac{\sigma_{S}^{2}}{\delta},\\
 \Gamma_{n/n-1} &=& e^{-2\lambda \delta}\Gamma_{n-1/n-1}+\frac{\sigma_{\mu}^{2}}{2\lambda}\left(1- e^{-2\lambda \delta}\right).
\end{eqnarray*}
\end{proof}
\begin{remarque}
In practice, the volatility is not constant. However, if the volatility $\sigma_{S}$ is $\mathbf{F}^{S}$-adapted, the two methods can be adapted and implemented. This assumption is satisfied if the volatility is a continuous time process. 
\end{remarque}
\subsection{Performance of statistical estimators}
In this sub-section, the asymptotic behaviour of the classical estimators is investigated. 
\subsubsection{Asymptotic behaviour of statistical estimator}
The discrete time model \hyperref[Equation Kalman]{(3)-(4)} can be reformulated using the following proposition (see \cite{LectureA} for details):
\begin{prop}\label{ArmaProperty}
Consider the model \hyperref[Equation Kalman]{(3)-(4)} with $\left( \lambda_{\mu},\sigma_{\mu},\sigma_{S}\right) \in  \mathbb{R}_+^{*}\times\mathbb{R}_+^{*}\times\mathbb{R}_+^{*}$. In this case, the process  $\left( y_i\right)$ is ARMA$\left( 1,1\right)$.
\end{prop}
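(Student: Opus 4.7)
Set $a := e^{-\lambda_\mu \delta}$, and denote $\sigma_u^2 := \sigma_S^2/\delta$ and $\sigma_v^2 := \sigma_\mu^2(1-e^{-2\lambda_\mu\delta})/(2\lambda_\mu)$ for the variances appearing in equations (3)--(4). The plan is to eliminate the unobserved AR(1) component $\mu$ from the observation equation and show that the resulting driving noise is MA(1), hence $y$ is ARMA(1,1).

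First I would compute $y_{k+1}-a\,y_k$. Using $\mu_{k+1}=a\mu_k+v_k$ and $y_k=\mu_k+u_k$, this gives
\begin{equation*}
y_{k+1}-a\,y_k = v_k + u_{k+1} - a\,u_k =: w_k.
\end{equation*}
Since $(u_k)$ and $(v_k)$ are mutually independent i.i.d. Gaussian sequences, the process $(w_k)$ is stationary and Gaussian. This already puts $y$ in the AR(1) + noise form one expects; the task reduces to identifying the noise structure.

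Next I would compute the autocovariance function of $(w_k)$ directly from the defining formula. A line of computation yields
\begin{equation*}
\gamma_w(0) = \sigma_v^2 + (1+a^2)\sigma_u^2, \qquad \gamma_w(1) = -a\,\sigma_u^2, \qquad \gamma_w(h) = 0 \text{ for } |h|\geq 2,
\end{equation*}
because $u_k$ appears in $w_k$ and $w_{k-1}$ only (with coefficients $-a$ and $1$ respectively), while $v_k$ appears only in $w_k$. A centered stationary Gaussian process with autocovariance vanishing for lags $\geq 2$ is an MA(1) process: one can solve the system $(1+b^2)\sigma_\epsilon^2=\gamma_w(0)$, $b\,\sigma_\epsilon^2=\gamma_w(1)$ for the unique invertible root $b\in(-1,0)$ and the corresponding $\sigma_\epsilon^2>0$, giving a white-noise sequence $(\epsilon_k)$ with $w_k=\epsilon_k+b\,\epsilon_{k-1}$. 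Combining with the first step, one obtains
\begin{equation*}
y_{k+1} - a\,y_k = \epsilon_k + b\,\epsilon_{k-1},
\end{equation*}
which is the ARMA(1,1) representation.

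The only mild obstacle is the final MA(1) identification: strictly speaking, matching two autocovariances to the two parameters $(b,\sigma_\epsilon^2)$ requires checking that the quadratic $\gamma_w(0)\,b = \gamma_w(1)\,(1+b^2)$ has a real root in $(-1,1)$, which follows from the elementary inequality $|\gamma_w(1)|\leq \gamma_w(0)/2$ (here strict, since $\sigma_v^2>0$). Everything else is bookkeeping. Note also that $\mu_0=0$ makes the process only asymptotically stationary, but since the ARMA relation $y_{k+1}-ay_k=w_k$ holds pathwise for every $k\geq 1$, the conclusion stands.
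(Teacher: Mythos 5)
Your proof is correct, and it is genuinely different from what the paper does: the paper offers no proof of Proposition~\ref{ArmaProperty} at all, simply invoking it as a known result from Genon-Catalot's lecture notes (see \cite{LectureA}), so your argument supplies the self-contained elementary computation that the paper delegates to the literature. Your route is the standard one for ``AR(1) plus noise'': apply the filter $y_{k+1}-a y_k$ with $a=e^{-\lambda_\mu\delta}$ to eliminate the hidden state, check that $w_k = v_k + u_{k+1}-a u_k$ has autocovariance truncated after lag one, and identify an MA(1) by moment matching. The computations check out: $\gamma_w(0)=\sigma_v^2+(1+a^2)\sigma_u^2$, $\gamma_w(1)=-a\sigma_u^2$, $\gamma_w(h)=0$ for $|h|\geq 2$, and the discriminant condition $2|\gamma_w(1)|<\gamma_w(0)$ holds strictly since $\sigma_v^2>0$, yielding the invertible root $b\in(-1,0)$ and $\sigma_\epsilon^2=\gamma_w(1)/b>0$. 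Two points in your write-up deserve emphasis. First, the MA(1) identification genuinely uses Gaussianity: matching autocovariances pins down the law of a centered Gaussian sequence, and for Gaussian processes white noise means i.i.d.; for a non-Gaussian state-space model the same argument would only produce a weak (uncorrelated-noise) MA(1) representation, which is all the ARMA framework requires anyway. Second, your closing caveat about $\mu_0=0$ is handled correctly, and in fact more is true than you claim: since $w_k$ involves only the noises $(u_k)$ and $(v_k)$ and not $\mu$, the differenced process $(w_k)_{k\geq 1}$ is \emph{exactly} stationary; only $y$ itself fails stationarity through its initial condition, and it relaxes geometrically to the stationary ARMA(1,1) solution --- the same mild abuse committed by the paper in stating the proposition. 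What your approach buys is transparency (explicit ARMA coefficients $a$ and $b$ in terms of $\lambda_\mu,\sigma_\mu,\sigma_S,\delta$, which could feed directly into the spectral density used in Theorem~\ref{TheoCRB}); what the paper's citation buys is brevity and a pointer to the general theory covering this whole class of models.
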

The asymptotic behaviour of the classical estimators follows. 
Indeed, the identifiability property and the asymptotic normality of the maximum likelihood estimator are well known in stationary ARMA Gaussian processes (see \cite{ARMABook2}, section 10.8). Moreover, the asymptotic behaviour of the Bayesian estimators are also guaranteed by the ARMA$(1,1)$ property of the process $\left( y_i\right) $. If the prior density function  is continuous and positive in an open neighbourhood of the real parameters, the Bayesian estimators are asymptotically normal (see \cite{Bernstein} in which a generalized Bernstein Von Mises theorem for stationary "short memory" processes is given, or \cite{ARMABayes} for a discussion on the Bayesian analysis of ARMA processes).
\subsubsection{Cramer Rao bound}
This bound is the lowest variance of the unbiased estimators. 
The following corollary of the Cramer Rao bound Theorem gives a formal description of the CRB.
\begin{cor}
Consider the model \hyperref[Equation Kalman]{(3)-(4)} and $N$ observations $\left( y_1,\cdots\right.$ $ \left. ,y_N\right)^T$. Suppose that $\left( \lambda_{\mu},\sigma_{\mu},\sigma_{S}\right) \in  \mathbb{R}_+^{*}\times\mathbb{R}_+^{*}\times\mathbb{R}_+^{*}$. If $\hat{\theta}_N$ is an unbiased estimator of $\theta=\left(\lambda_{\mu},\sigma_{\mu}\right)$, we have:
\begin{eqnarray*}
\mathbb{C}ov_\theta \left( \hat{\theta}_N\right)\geqslant CRB\left(\theta \right).
\end{eqnarray*} 
This bound is given by $CRB\left(\theta \right)= I^{-1}_N\left( \theta\right)$, where $I_N\left( \theta\right)$ is the Fisher Information matrix:
\begin{eqnarray*}
\left( I_N\left( \theta\right)\right)_{i,j} =-\mathbb{E} \left[ \frac{\partial^2 \log f\left(y_{1},...y_{N} | \theta \right) }{\partial \theta_i \partial \theta_j}  \right],
\end{eqnarray*}
and $I_N\left( \theta\right)=N I_1\left( \theta\right)$. Moreover, the maximum likelihood estimator $\widehat{\theta}^{ML}_N$ attains this bound:
\begin{eqnarray*}
\sqrt N \left( \widehat{\theta}^{ML}_N-\theta\right) \rightarrow \mathcal{N}\left( 0,I^{-1}_1\left( \theta\right)  \right).
\end{eqnarray*}
\end{cor}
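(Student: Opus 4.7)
The plan is to combine three ingredients: the classical Cramér-Rao inequality applied to the explicit likelihood (\ref{Likelihood}), a Whittle-type representation of the limiting Fisher information that exploits Proposition \ref{ArmaProperty}, and the asymptotic normality of the MLE for stationary Gaussian ARMA processes quoted just above the statement.

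First I would check the standard regularity hypotheses on the open parameter set $\mathbb{R}_{+}^{*}\times\mathbb{R}_{+}^{*}$: smoothness of $\theta\mapsto\log f(y_{1},\ldots,y_{N}|\theta)$ is immediate because $\Sigma_{y_{1:N}|\theta}$ is a smooth function of $(\lambda_{\mu},\sigma_{\mu})$ through (\ref{Cov}) and (\ref{OuCov}), and it is uniformly positive definite thanks to the diagonal contribution $\frac{\sigma_{S}^{2}}{\delta}I_{N}$ from $\Sigma_{u_{1:N}|\theta}$; Gaussian tails then legitimise all differentiations under the integral sign. The classical multivariate Cramér-Rao inequality applied to an unbiased estimator $\hat\theta_{N}$ of $\theta$ directly yields $\mathbb{C}ov_{\theta}(\hat\theta_{N})\ge I_{N}^{-1}(\theta)$ together with the claimed expression for $(I_{N})_{i,j}$.

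The second (and main) step is the identity $I_{N}(\theta)=N\,I_{1}(\theta)$, which must be read asymptotically, i.e.\ $I_{N}(\theta)/N\to I_{1}(\theta)$. This is where Proposition \ref{ArmaProperty} is essential: since $(y_{i})$ is a stationary Gaussian ARMA$(1,1)$ process, $\Sigma_{y_{1:N}|\theta}$ is the Toeplitz matrix generated by an explicit spectral density $f_{Y}(\omega;\theta)$ that can be read off from (3)-(4). Grenander-Szegő-type theorems then give both the limit of $\frac{1}{N}\log\det\Sigma_{y_{1:N}|\theta}$ and the convergence of its derivatives, leading to Whittle's closed-form expression
\begin{eqnarray*}
\left(I_{1}(\theta)\right)_{i,j}=\frac{1}{4\pi}\int_{-\pi}^{\pi}\frac{\partial\log f_{Y}(\omega;\theta)}{\partial\theta_{i}}\frac{\partial\log f_{Y}(\omega;\theta)}{\partial\theta_{j}}\,d\omega.
\end{eqnarray*}
Substituting the explicit ARMA$(1,1)$ spectral density then produces the promised closed form for the $2\times 2$ matrix $I_{1}(\theta)$.

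Finally, the asymptotic normality $\sqrt{N}\,(\widehat{\theta}_{N}^{ML}-\theta)\to\mathcal{N}(0,I_{1}^{-1}(\theta))$ follows from the classical MLE asymptotics for stationary Gaussian ARMA processes invoked in the preceding paragraph via \cite{ARMABook2}, section 10.8, once identifiability of the AR(1)-plus-noise parametrization on $\mathbb{R}_{+}^{*}\times\mathbb{R}_{+}^{*}$ is noted. I expect the main obstacle to be the Toeplitz-to-spectral step: one must control the convergence of both $\frac{1}{N}\log\det\Sigma_{y_{1:N}|\theta}$ and its second derivatives in $\theta$ uniformly in a neighbourhood of the true parameter, which is exactly what the Grenander-Szegő machinery delivers for smooth, strictly positive spectral densities — and both properties are guaranteed by the ARMA$(1,1)$ representation of Proposition \ref{ArmaProperty} together with $\sigma_{S}>0$.
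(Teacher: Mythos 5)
Your proposal is correct and follows essentially the same route as the paper: the paper's entire proof is the one-line observation that the corollary is a consequence of Proposition \ref{ArmaProperty} together with the stationary Gaussian ARMA theory of \cite{ARMABook2}, section 10.8, which is exactly the machinery (classical Cram\'er--Rao inequality, Toeplitz/Whittle asymptotics for $I_N(\theta)/N$, and ARMA maximum-likelihood asymptotic normality) that you spell out. Your reading of $I_N\left(\theta\right)=N I_1\left(\theta\right)$ as the asymptotic statement $I_N\left(\theta\right)/N\rightarrow I_1\left(\theta\right)$ is the right interpretation, and the closed form you derive from Whittle's formula is precisely what the paper defers to Theorem \ref{TheoCRB}.
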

This result is a consequence of Proposition \ref{ArmaProperty} (see \cite{ARMABook2}, section 10.8). We can also provide an analytic representation of the Fisher information matrix:
\begin{theo}\label{TheoCRB}
For the model \hyperref[Equation Kalman]{(3)-(4)}, if $\left( \lambda_{\mu},\sigma_{\mu},\sigma_{S}\right) \in  \mathbb{R}_+^{*}\times\mathbb{R}_+^{*}\times\mathbb{R}_+^{*}$, we have:
\begin{eqnarray*}
I_1\left( \theta\right)=\left( \dfrac{1}{4\Pi} \int_{-\Pi}^{\Pi} f_{\theta}^{-2} \left(\omega \right)\dfrac{\partial f_{\theta} }{\partial \theta_i}\left(\omega \right)\dfrac{\partial f_{\theta} }{\partial \theta_j}\left(\omega \right)d \omega \right)_{1 \leq i,j \leq 2}, 
\end{eqnarray*}
where $f_{\theta}$ is the spectral density of the process $\left( y_i\right)$:
\begin{eqnarray*}
f_{\theta}\left(\omega \right) =\frac{\frac{\sigma_{\mu}^{2}}{2\lambda_{\mu}}\left( 1- e^{-2 \lambda_{\mu} \delta}\right) +\frac{\sigma_{S}^{2}}{\delta}\left( 1+e^{-2\lambda_\mu\delta}\right) -\frac{2e^{-\lambda_\mu\delta}\sigma_{S}^{2}}{\delta}\cos\left(\omega \right)  }{1+e^{-2\lambda_\mu\delta}-2e^{-\lambda_\mu\delta}\cos\left( \omega\right) }.
\end{eqnarray*}
\end{theo}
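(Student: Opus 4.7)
The plan is to combine two classical ingredients: the Whittle asymptotic formula for the Fisher information of a Gaussian stationary ARMA process, and a direct computation of the spectral density $f_\theta$ of the sequence $(y_i)$.

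First, I would invoke the generic result for stationary Gaussian ARMA models stated in the Genon-Catalot notes cited by the authors (or in Brockwell--Davis, section 10.8) according to which the asymptotic per-observation Fisher information is
\begin{eqnarray*}
(I_1(\theta))_{i,j} \;=\; \frac{1}{4\pi}\int_{-\pi}^{\pi} \frac{\partial \log f_{\theta}(\omega)}{\partial \theta_i}\,\frac{\partial \log f_{\theta}(\omega)}{\partial \theta_j}\,d\omega.
\end{eqnarray*}
Using $\partial_{\theta}\log f_{\theta}=f_{\theta}^{-1}\,\partial_{\theta}f_{\theta}$ this is exactly the expression announced in the theorem. Applicability to $(y_i)$ is guaranteed by Proposition \ref{ArmaProperty}; the fact that $\mu_0=0$ makes $(\mu_k)$ only asymptotically stationary does not affect the per-observation limit.

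Next, I would compute $f_\theta$ explicitly. Since $y_k=\mu_k+u_k$ with $\mu$ and $u$ independent, the spectral densities add: $f_{y}=f_{\mu}+f_{u}$. I adopt the convention $\gamma(k)=\int_{-\pi}^{\pi} e^{ik\omega}f(\omega)\,\frac{d\omega}{2\pi}$, so that a white noise with variance $\sigma^2$ has flat spectral density equal to $\sigma^2$; in particular $f_{u}(\omega)=\sigma_{S}^{2}/\delta$. For the AR(1) recursion $\mu_{k+1}-a\mu_{k}=v_{k}$ with $a=e^{-\lambda_{\mu}\delta}$ and $\mathbb{V}ar(v_k)=\frac{\sigma_{\mu}^{2}}{2\lambda_{\mu}}(1-e^{-2\lambda_{\mu}\delta})$, the standard rational spectral density formula gives
\begin{eqnarray*}
f_{\mu}(\omega)\;=\;\frac{\frac{\sigma_{\mu}^{2}}{2\lambda_{\mu}}(1-e^{-2\lambda_{\mu}\delta})}{|1-ae^{-i\omega}|^{2}}\;=\;\frac{\frac{\sigma_{\mu}^{2}}{2\lambda_{\mu}}(1-e^{-2\lambda_{\mu}\delta})}{1+e^{-2\lambda_{\mu}\delta}-2e^{-\lambda_{\mu}\delta}\cos\omega}.
\end{eqnarray*}
Adding $f_{u}$ and placing everything on the common denominator $1+e^{-2\lambda_{\mu}\delta}-2e^{-\lambda_{\mu}\delta}\cos\omega$ collects the numerator into
$$\tfrac{\sigma_{\mu}^{2}}{2\lambda_{\mu}}(1-e^{-2\lambda_{\mu}\delta})+\tfrac{\sigma_{S}^{2}}{\delta}(1+e^{-2\lambda_{\mu}\delta})-\tfrac{2e^{-\lambda_{\mu}\delta}\sigma_{S}^{2}}{\delta}\cos\omega,$$
which is exactly the stated formula for $f_{\theta}$.

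The most delicate point, to my eye, is reconciling the normalization of the spectral density with the precise form of the Whittle formula: the prefactor $1/(4\pi)$ in the statement is consistent only with the convention described above for $\gamma(k)$, and consequently with the unscaled numerators in $f_{\mu}$ and $f_{u}$. Once this convention is fixed, everything else is elementary: the ARMA(1,1) property supplies the general formula, and the independence of $\mu$ and $u$ together with the explicit AR(1) transfer function supplies $f_\theta$.
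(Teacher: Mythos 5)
Your proposal is correct and follows essentially the same route as the paper's own proof: invoke Whittle's formula for the asymptotic Fisher information of the stationary Gaussian ARMA$(1,1)$ process (justified by Proposition \ref{ArmaProperty}), then identify the spectral density $f_\theta$. The only difference is that you carry out the spectral density computation explicitly (additivity of the AR$(1)$ and white-noise spectra over a common denominator, with the normalization convention made precise), where the paper simply cites Brockwell--Davis, section 4.4; this is a welcome filling-in of detail rather than a different argument.
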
 
\begin{proof}
The Whittle's formula (see \cite{Whittle} for details) gives the integral representation of the Fisher information matrix. Since the process  $\left( y_i\right)$ is ARMA$\left( 1,1\right)$, the expression of its spectral density follows (see \cite{ARMABook2}, section 4.4).
\end{proof}
Finally, the Cramer Rao Bound of the trend parameters can be computed using Theorem \ref{TheoCRB}.

\section{Impact of parameters mis-specification}
In this section, we consider the continuous time Kalman filter with a bad calibration in the steady-state regime. First, we characterize the law of the residuals between the filter (mis-specified or not) and the hidden process. Finally, we study the impact of parameters mis-specification on the detection of a positive trend.
\subsection{Context}	
Suppose that the risky asset $S$ is given by the model \hyperref[Model]{(1)-(2)} with $\theta^*=\left(\sigma_{\mu}^*,\lambda_{\mu}^* \right) $, and suppose that an agent thinks that the parameters are equal to $\theta=\left(\sigma_{\mu},\lambda_{\mu} \right) $. Assuming the steady-state regime and using these estimates and Proposition \ref{ContinuousEstimateProposition}, the agent implements the continuous time mis-specified Kalman filter:
\begin{eqnarray}\label{FauxFiltreEDS}
d \widehat{\mu}_{t}=-\lambda_{\mu} \beta \widehat{\mu}_{t}dt+\lambda_{\mu}\left( \beta -1 \right) \frac{dS_{t}}{S_{t}}, 
\end{eqnarray}
where $\beta=\beta\left( \lambda_{\mu},\sigma_{\mu},\sigma_{S}\right)$ (see Equation (\ref{BetaFormula})) and $\widehat{\mu}_{0}=0$.
The following lemma gives the law of the mis-specified Kalman filter: 
\begin{lemme}\label{VarianceFauxFiltre}
Consider the model \hyperref[Model]{(1)-(2)} with $\theta^*=\left(\sigma_{\mu}^*,\lambda_{\mu}^* \right) $. In this case, the mis-specified continuous time filter of Equation (\ref{FauxFiltreEDS}) is given by:
\begin{eqnarray}\label{False Filter}
\widehat{\mu}_{t}=\lambda_{\mu}\left(\beta -1 \right)e^{-\lambda_{\mu} \beta t} \left(   \int_{0}^{t}e^{\lambda_{\mu} \beta s}\mu^{*}_{s}ds+\sigma_{S} \int_{0}^{t}e^{\lambda_{\mu} \beta s}dW_{s}^{S} \right). 
\end{eqnarray}
Moreover, $\widehat{\mu}$ is a centered Gaussian process and its variance is given by:
\begin{eqnarray*}
&&\mathbb{V}ar \left[ \widehat{\mu}_{t}\right] =\mathbb{E} \left[ \widehat{\mu}_{t}^{2}\right] =\frac{\lambda_{\mu}^2 \left(\beta -1 \right)^2 \left( \sigma_{\mu}^{*}\right)^2 }{\lambda_{\mu}^{*}\left( \lambda_{\mu}\beta-\lambda_{\mu}^{*}\right) }\left[ \frac{1-e^{-\left( \lambda_{\mu}\beta+\lambda_{\mu}^{*}\right)t }}{\lambda_{\mu}\beta+\lambda_{\mu}^{*}}\right. \\
&&\left. +\frac{2e^{-\left( \lambda_{\mu}\beta+\lambda_{\mu}^{*}\right)t }-e^{-2\lambda_{\mu}^{*}t}-e^{-2 \lambda_{\mu}\beta t }   }{\lambda_{\mu}\beta-\lambda_{\mu}^{*}}+\frac{e^{-2 \lambda_{\mu}\beta t}-1}{2 \lambda_{\mu}\beta} \right] \\ &&+
\frac{\lambda_{\mu}\left(\beta -1 \right)^2  \sigma_{S}^2 }{2\beta }\left(1-e^{-2 \lambda_{\mu}\beta t} \right). 
\end{eqnarray*}
\end{lemme}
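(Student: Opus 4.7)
The plan is first to solve the linear SDE \eqref{FauxFiltreEDS} explicitly, then read off the Gaussian property, and finally compute the variance as the sum of two independent contributions.

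\textbf{Step 1: solving the SDE.} Equation \eqref{FauxFiltreEDS} is a scalar linear SDE with deterministic drift coefficient $-\lambda_\mu\beta$, so I would use the integrating factor $e^{\lambda_\mu\beta t}$. Applying Itô's formula to $e^{\lambda_\mu\beta t}\widehat{\mu}_t$ gives
\begin{eqnarray*}
d\bigl(e^{\lambda_\mu\beta t}\widehat{\mu}_t\bigr) = \lambda_\mu(\beta-1)\, e^{\lambda_\mu\beta t}\,\frac{dS_t}{S_t}.
\end{eqnarray*}
Integrating from $0$ to $t$ using $\widehat{\mu}_0=0$ and substituting $\frac{dS_t}{S_t}=\mu^*_t\,dt+\sigma_S\,dW_t^S$ (from the true model \hyperref[Model]{(1)-(2)}) yields formula \eqref{False Filter}.

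\textbf{Step 2: Gaussian and centred.} Since $\mu^*_0=0$, the process $\mu^*$ is a centred Gaussian OU process (it may be written as a Wiener integral against $W^\mu$). The Wiener integral $\int_0^t e^{\lambda_\mu\beta s}dW_s^S$ is independent of $\mu^*$ because $W^S\perp W^\mu$. Hence \eqref{False Filter} exhibits $\widehat{\mu}_t$ as an affine combination of two independent centred Gaussians, so $\widehat{\mu}$ is itself a centred Gaussian process.

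\textbf{Step 3: variance computation.} By independence,
\begin{eqnarray*}
\mathbb{V}ar[\widehat{\mu}_t] = \lambda_\mu^2(\beta-1)^2 e^{-2\lambda_\mu\beta t}\left(A(t)+\sigma_S^2\, B(t)\right),
\end{eqnarray*}
where $A(t)=\mathbb{V}ar\bigl[\int_0^t e^{\lambda_\mu\beta s}\mu^*_s\,ds\bigr]$ and $B(t)=\mathbb{V}ar\bigl[\int_0^t e^{\lambda_\mu\beta s}dW_s^S\bigr]$. The Itô isometry gives immediately $B(t)=\frac{e^{2\lambda_\mu\beta t}-1}{2\lambda_\mu\beta}$, which after multiplication by $\lambda_\mu^2(\beta-1)^2 e^{-2\lambda_\mu\beta t}\sigma_S^2$ reproduces the last line of the claimed formula. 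For $A(t)$, I would write
\begin{eqnarray*}
A(t)=\int_0^t\!\!\int_0^t e^{\lambda_\mu\beta(s+u)}\,\mathbb{C}ov(\mu^*_s,\mu^*_u)\,du\,ds
\end{eqnarray*}
and plug in the OU covariance \eqref{OuCov} with parameters $(\lambda_\mu^*,\sigma_\mu^*)$. Splitting into $\{u\le s\}$ and $\{u\ge s\}$, using symmetry, and setting $a=\lambda_\mu\beta-\lambda_\mu^*$ and $b=2\lambda_\mu^*$ reduces the double integral to elementary exponential integrals of the form $\int_0^t e^{\alpha s}\,ds$. Collecting the three distinct exponents $\lambda_\mu\beta+\lambda_\mu^*$, $2\lambda_\mu^*$ and $2\lambda_\mu\beta$ yields, after multiplication by $\lambda_\mu^2(\beta-1)^2 e^{-2\lambda_\mu\beta t}$, the first two lines of the stated formula.

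\textbf{Main obstacle.} No step is conceptually hard; the only real difficulty is the bookkeeping in Step~3. One must keep track of several competing exponential rates and organise the partial fractions so that the $(\lambda_\mu\beta-\lambda_\mu^*)$ denominators in the final expression emerge cleanly. (The apparent singularity at $\lambda_\mu\beta=\lambda_\mu^*$ is removable, since the bracketed expression vanishes to matching order there.) Once the algebra is arranged around the three exponents above, the identification with the stated formula is routine.
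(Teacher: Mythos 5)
Your proposal is correct and follows essentially the same route as the paper's own proof: It\^o's lemma with the integrating factor $e^{\lambda_{\mu}\beta t}$ to obtain the explicit representation, Gaussianity and centredness from the independence of $\mu^{*}$ and $W^{S}$, then the variance as the sum of the It\^o-isometry term and the double integral of the Ornstein--Uhlenbeck covariance (\ref{OuCov}). Your added remark that the apparent singularity at $\lambda_{\mu}\beta=\lambda_{\mu}^{*}$ is removable is a useful observation the paper leaves implicit, but it does not change the argument.
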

\begin{proof}
Applying It\^o's lemma to the function $f\left(\widehat{\mu},t\right) =e^{\lambda_{\mu} \beta t}\widehat{\mu}_{t}$, and integrating from $0$ to $t$, Equation (\ref{False Filter}) follows. Therefore, $\widehat{\mu}$ is also a Gaussian process. Its mean is zero (because $\mu^{*}_{0}=0$). Since the processes $\mu^{*}$ and $W^{S}$ are supposed to be independent, the variance of $\widehat{\mu}$ is given by the sum of the variances of the terms in Equation (\ref{False Filter}). Moreover:
\begin{eqnarray*}
\mathbb{V}ar \left[ \int_{0}^{t}e^{\lambda_{\mu}\beta}dW_{s}^{S}\right]&=&\frac{e^{2\lambda_{\mu}\beta t}-1}{2 \lambda_{\mu}\beta},\\
\mathbb{V}ar \left[\int_{0}^{t}e^{\lambda_{\mu} \beta s}\mu^{*}_{s}ds\right] &=&\int_{0}^{t}\int_{0}^{t} e^{\lambda_{\mu} \beta \left( s_1+s_2\right) }\mathbb{C}ov\left( \mu^{*}_{s_1},\mu^{*}_{s_2}\right)ds_1 ds_2,
\end{eqnarray*}
and $\mathbb{C}ov\left( \mu^{*}_{s_1},\mu^{*}_{s_2}\right)$ is given by Equation (\ref{OuCov}). The variance of the process $\widehat{\mu}_{t}$ follows.
\end{proof}
\subsection{Filtering with parameters mis-specification}
The impact of parameters mis-specification on trend filtering can be measured using the difference between the filter and the hidden process. 
The following theorem gives the law of the residuals.
\begin{theo}\label{TheoremMiss}
Consider the model \hyperref[Model]{(1)-(2)} with $\theta^*=\left(\sigma_{\mu}^*,\lambda_{\mu}^* \right) $ and the trend estimate defined in Equation (\ref{False Filter}). In this case, the process $\widehat{\mu}-\mu^{*}$
is a centered Gaussian process and its variance has a stationary limit:
\begin{eqnarray}\label{EquationAsymptoticResiduals}
  \lim\limits_{t\rightarrow\infty}\mathbb{V}ar \left[ \widehat{\mu}_{t}-\mu^{*}_{t}\right] =
\frac{\sigma_S^2}{2\beta}\left( \lambda_{\mu}\left(\beta-1 \right)^2 +\lambda_{\mu}^*\left(\left( \beta^*\right) ^2-1  \right)\frac{\lambda_{\mu}^*\beta+\lambda_{\mu}}{\lambda_{\mu}\beta+\lambda_{\mu}^*} \right),
\end{eqnarray}
where $\beta=\beta\left( \lambda_{\mu},\sigma_{\mu},\sigma_{S}\right)$ and $\beta^*=\beta\left( \lambda_{\mu}^*,\sigma_{\mu}^*,\sigma_{S}\right)$ (see Equation (\ref{BetaFormula})). 

Moreover, if $\left(\sigma_{\mu},\lambda_{\mu} \right)=\left(\sigma_{\mu}^*,\lambda_{\mu}^* \right)$, Equation (\ref{EquationAsymptoticResiduals}) becomes:
\begin{eqnarray}\label{EquationAsymptoticResidualsWell}
   \lim\limits_{t\rightarrow\infty}\mathbb{V}ar \left[ \widehat{\mu}_{t}^* -\mu^{*}_{t}\right] =
\lambda_{\mu}^* \sigma_{S}^2\left( \beta^*-1\right).
\end{eqnarray}
\end{theo}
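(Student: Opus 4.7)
The plan is to leverage the explicit representation of the mis-specified filter from Lemma \ref{VarianceFauxFiltre} and the fact that $\mu^*$ is itself an explicit Gaussian Ornstein--Uhlenbeck integral, and then reduce the theorem to a direct (though lengthy) variance computation followed by a limit.

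First I would argue the Gaussian/centered property. From Equation (\ref{False Filter}), $\widehat{\mu}_t$ is an affine functional of $\mu^*$ and of the independent Brownian motion $W^S$. Since $\mu^*_t=\sigma_\mu^*\int_0^t e^{-\lambda_\mu^*(t-s)}dW_s^\mu$ with $W^\mu$ independent of $W^S$, the process $\widehat{\mu}_t-\mu^*_t$ lies in the Gaussian space generated by $(W^S,W^\mu)$, hence is centered Gaussian. Only the variance requires work.

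Next I would expand
\begin{equation*}
\mathbb{V}\mathrm{ar}[\widehat{\mu}_t-\mu^*_t]=\mathbb{V}\mathrm{ar}[\widehat{\mu}_t]+\mathbb{V}\mathrm{ar}[\mu^*_t]-2\,\mathbb{C}\mathrm{ov}(\widehat{\mu}_t,\mu^*_t).
\end{equation*}
The first term is given by Lemma \ref{VarianceFauxFiltre}; the second is the standard OU variance $\tfrac{(\sigma_\mu^*)^2}{2\lambda_\mu^*}(1-e^{-2\lambda_\mu^* t})$. For the cross term, I would use independence of $\mu^*$ and $W^S$ to drop the stochastic-integral part of (\ref{False Filter}) and obtain
\begin{equation*}
\mathbb{C}\mathrm{ov}(\widehat{\mu}_t,\mu^*_t)=\lambda_\mu(\beta-1)e^{-\lambda_\mu\beta t}\int_0^t e^{\lambda_\mu\beta s}\,\mathbb{C}\mathrm{ov}(\mu^*_s,\mu^*_t)\,ds,
\end{equation*}
and plug in the OU covariance from Equation (\ref{OuCov}); both resulting exponential integrals are elementary.

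Then I would let $t\to\infty$. All terms of the form $e^{-\lambda_\mu\beta t}$, $e^{-\lambda_\mu^* t}$, $e^{-(\lambda_\mu\beta+\lambda_\mu^*)t}$ vanish (positivity of the parameters is used here), so each of the three pieces has a finite limit. It remains to collect them into the announced form. Using $\beta^2-1=\frac{\sigma_\mu^2}{\lambda_\mu^2\sigma_S^2}$, equivalently $\sigma_\mu^2=\lambda_\mu^2\sigma_S^2(\beta^2-1)$, and the analogous identity for $\beta^*$, one rewrites $\frac{(\sigma_\mu^*)^2}{\lambda_\mu^*}=\lambda_\mu^*\sigma_S^2((\beta^*)^2-1)$; the fractions with denominators $\lambda_\mu\beta\pm\lambda_\mu^*$ combine over the common denominator $\lambda_\mu\beta+\lambda_\mu^*$, and after simplification one reads off (\ref{EquationAsymptoticResiduals}). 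For the well-specified corollary (\ref{EquationAsymptoticResidualsWell}), I would substitute $\lambda_\mu=\lambda_\mu^*$, $\beta=\beta^*$ directly into (\ref{EquationAsymptoticResiduals}): the ratio $\frac{\lambda_\mu^*\beta+\lambda_\mu}{\lambda_\mu\beta+\lambda_\mu^*}$ becomes $1$, and the bracket reduces to $\lambda_\mu^*((\beta^*-1)^2+(\beta^*)^2-1)=2\lambda_\mu^*\beta^*(\beta^*-1)$, giving $\lambda_\mu^*\sigma_S^2(\beta^*-1)$ as claimed.

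The main obstacle is purely algebraic: keeping track of the many exponential terms in $\mathbb{V}\mathrm{ar}[\widehat{\mu}_t]$ from Lemma \ref{VarianceFauxFiltre} together with those coming from the covariance integral, and verifying that, after cancellation in the limit and regrouping using the $\beta$-identities, one lands on the compact form stated. No probabilistic subtlety is expected beyond the independence of $W^\mu$ and $W^S$.
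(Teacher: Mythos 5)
Your proposal is correct and follows essentially the same route as the paper's proof: the same decomposition $\mathbb{V}\mathrm{ar}[\widehat{\mu}_t-\mu^*_t]=\mathbb{V}\mathrm{ar}[\widehat{\mu}_t]+\mathbb{V}\mathrm{ar}[\mu^*_t]-2\,\mathbb{C}\mathrm{ov}(\widehat{\mu}_t,\mu^*_t)$, with the first term from Lemma \ref{VarianceFauxFiltre}, the cross term computed via independence of $W^S$ and $\mu^*$ together with the OU covariance (\ref{OuCov}), and the limit $t\to\infty$ followed by algebraic regrouping; your explicit substitution check for the well-specified case, $(\beta^*-1)^2+(\beta^*)^2-1=2\beta^*(\beta^*-1)$, is just a more detailed rendering of the paper's ``tending $\theta$ to $\theta^*$'' step.
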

\begin{proof}
Using Equation (\ref{False Filter}), it follows that the process $\widehat{\mu}-\mu^{*}$
is a centered Gaussian process. 
The variance of this difference can be computed in closed form:
\begin{eqnarray*}
\mathbb{V}ar \left[ \widehat{\mu}_{t}-\mu^{*}_{t}\right]=\mathbb{V}ar \left[ \widehat{\mu}_{t}\right]+\mathbb{V}ar \left[ \mu^{*}_{t}\right]-2*\mathbb{C}\text{ov} \left[ \widehat{\mu}_{t},\mu^{*}_{t}\right],
\end{eqnarray*}
where $\mathbb{V}ar \left[ \mu^{*}_{t}\right]=\frac{\left( \sigma_{\mu}^{*}\right)^{2} }{2\lambda_{\mu}^{*}}\left(1-e^{-2\lambda_{\mu}^{*}t} \right) $, and $\mathbb{V}ar \left[ \widehat{\mu}_{t}\right]$ is given by Lemma \ref{VarianceFauxFiltre}. Since the processes $W^S$ and $\mu^{*}$ are supposed to be independent, we have:
\begin{eqnarray*}
\mathbb{C}\text{ov} \left[ \widehat{\mu}_{t},\mu^{*}_{t}\right]&=&\frac{\lambda_{\mu}\left(\beta-1 \right)\left( \sigma_{\mu}^{*}\right)^{2} }{2\lambda_{\mu}^{*}}\left(\frac{1-e^{-\left( \lambda_{\mu}\beta+\lambda_{\mu}^{*}\right)t }}{\lambda_{\mu}\beta+\lambda_{\mu}^{*}}\right. \\
&&\left. -\frac{e^{-2\lambda_{\mu}^{*}t}-e^{-\left( \lambda_{\mu}\beta+\lambda_{\mu}^{*}\right)t }}{ \lambda_{\mu}\beta-\lambda_{\mu}^{*} } \right) .
\end{eqnarray*}
The asymptotic variance is obtained by tending t to infinity:
\begin{eqnarray*}
\lim\limits_{t\rightarrow\infty}\mathbb{V}ar \left[ \widehat{\mu}_{t}-\mu^{*}_{t}\right]&=&
\frac{ \lambda_{\mu} \left(\beta -1 \right)}{2\beta }\left[ \left(\beta -1 \right)\sigma_S^2-\frac{\left( \sigma_{\mu}^{*}\right)^{2}\left( \beta +1\right) }{\lambda_{\mu}^{*}\left( \lambda_{\mu}\beta+\lambda_{\mu}^{*}\right) }
 \right] \\
 && + \frac{\left( \sigma_{\mu}^{*}\right)^{2} }{2\lambda_{\mu}^{*}},
\end{eqnarray*}
and Equation (\ref{EquationAsymptoticResiduals}) follows. Finally, Equation (\ref{EquationAsymptoticResidualsWell}) is obtained by tending $\theta$ to $\theta^*$.
\end{proof}
\begin{remarque}
Consider the well-specified case $\left(\sigma_{\mu},\lambda_{\mu} \right)=\left(\sigma_{\mu}^*,\lambda_{\mu}^* \right)$. Using Equation (\ref{EquationAsymptoticResidualsWell}), it follows that:
\begin{eqnarray}\label{RatioVariance}
 \lim\limits_{t\rightarrow\infty}\frac{\mathbb{V}ar \left[ \widehat{\mu}_{t}^* -\mu^{*}_{t}\right]}{\mathbb{V}ar \left[\mu^{*}_{t}\right]}=\frac{2}{1+\sqrt{1+ \frac{\left( \sigma_{\mu}^{*}\right) ^{2}}{ \left( \lambda_{\mu}^{*}\right) ^{2} \sigma_{S}^{2}}}}.
\end{eqnarray}
Then, the asymptotic relative variance of the well-specified residuals is an increasing function of $\lambda_{\mu}^{*}$ and a decreasing function of $\sigma_{\mu}^{*}$. 
\end{remarque}
\subsection{Detection of a positive trend}
In practice, the trend estimate (mis-specified or not) can be used for an investment decision. For example, a positive estimate leads to a long position. So, it is interesting to know the probability to have a positive trend, knowing a positive estimate. We derive this probability in closed form.
The following proposition gives the asymptotic conditional law of the trend  $\left( \mu^{*}_t|\hat{\mu}_t=x\right)$:
\begin{prop}\label{ConditionalLaw}
Consider the model \hyperref[Model]{(1)-(2)} with $\theta^*=\left(\sigma_{\mu}^*,\lambda_{\mu}^* \right) $ and the trend estimate defined in Equation (\ref{False Filter}). In this case:
\begin{eqnarray}\label{CondDistrib}
\left( \mu^{*}_t|\hat{\mu}_t=x\right) \underset{t \rightarrow \infty}{\overset{\mathcal{L}}{\rightarrow}} \mathcal{N}\left( \mathbb{M}_{\mu^{*}|\hat{\mu}}^{\infty},\mathbb{V}ar_{\mu^{*}|\hat{\mu}}^{\infty}\right) ,
\end{eqnarray}
with:
\begin{eqnarray}
\mathbb{M}_{\mu^{*}|\hat{\mu}}^{\infty}&=&\frac{\lambda_\mu^*\beta\left(\left(\beta^* \right)^2-1  \right) }{\left( \beta-1\right)\left(\lambda_\mu \beta+\lambda_\mu^*\left(\beta^* \right)^2 \right)  }x,\label{eq1}\\ 
\mathbb{V}ar_{\mu^{*}|\hat{\mu}}^{\infty}&=&\mathbb{V}ar_{\mu^*}^{\infty}\left( 1-\frac{\lambda_\mu^*\lambda_\mu\beta\left(\left(\beta^* \right)^2-1  \right)}{\left(\lambda_\mu^*+\lambda_\mu\beta \right) \left(\lambda_\mu \beta+\lambda_\mu^*\left(\beta^* \right)^2 \right)}\right) ,\label{eq2}
\end{eqnarray}
where $\mathbb{V}ar_{\mu^*}^{\infty}=\frac{(\sigma_\mu^*)^2}{2\lambda_\mu^*}$. 

Moreover, if $\left(\sigma_{\mu},\lambda_{\mu} \right)=\left(\sigma_{\mu}^*,\lambda_{\mu}^* \right)$, Equation (\ref{CondDistrib}) becomes:
\begin{eqnarray}\label{CondDistribWellSpecified}
\left( \mu^{*}_t|\hat{\mu}^*_t=x\right) \underset{t \rightarrow \infty}{\overset{\mathcal{L}}{\rightarrow}} \mathcal{N}\left( x,\frac{2\mathbb{V}ar_{\mu^*}^{\infty}}{\beta^*+1}\right) ,
\end{eqnarray}
where $\beta^*=\beta\left( \lambda_{\mu}^*,\sigma_{\mu}^*,\sigma_{S}\right)$ (see Equation (\ref{BetaFormula})).
\end{prop}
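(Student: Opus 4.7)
The plan is to exploit the fact that, at each fixed time $t$, the pair $(\mu^{*}_{t},\widehat{\mu}_{t})$ is a centered bivariate Gaussian vector. Indeed, $\mu^{*}$ is Gaussian (an Ornstein--Uhlenbeck process driven by $W^{\mu}$ starting from $0$), and by Equation (\ref{False Filter}) $\widehat{\mu}_{t}$ is a linear functional of $\mu^{*}$ and of $W^{S}$, which are independent. Therefore $(\mu^{*}_{t},\widehat{\mu}_{t})$ is jointly Gaussian and centered, and the standard bivariate formula applies: for any $t$ such that $\mathbb{V}ar[\widehat{\mu}_{t}]>0$,
\begin{eqnarray*}
\left(\mu^{*}_{t}\,|\,\widehat{\mu}_{t}=x\right)\sim\mathcal{N}\!\left(\frac{\mathbb{C}\text{ov}(\mu^{*}_{t},\widehat{\mu}_{t})}{\mathbb{V}ar[\widehat{\mu}_{t}]}\,x,\;\mathbb{V}ar[\mu^{*}_{t}]-\frac{\mathbb{C}\text{ov}(\mu^{*}_{t},\widehat{\mu}_{t})^{2}}{\mathbb{V}ar[\widehat{\mu}_{t}]}\right).
\end{eqnarray*}
Since the right-hand side is a continuous function of the covariance matrix (which stays non-degenerate in the limit), convergence in law reduces to computing the limits of the three scalar quantities above.

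All three ingredients are already available. The stationary variance of the true trend is $\mathbb{V}ar[\mu^{*}_{t}]\to(\sigma^{*}_{\mu})^{2}/(2\lambda^{*}_{\mu})=\mathbb{V}ar^{\infty}_{\mu^{*}}$. The stationary variance of the mis-specified filter is read off from Lemma \ref{VarianceFauxFiltre}: only the non-exponential contributions survive, which after simplification give
\begin{eqnarray*}
\lim_{t\to\infty}\mathbb{V}ar[\widehat{\mu}_{t}]=\frac{\lambda_{\mu}(\beta-1)^{2}}{2\beta}\left[\frac{(\sigma^{*}_{\mu})^{2}}{\lambda^{*}_{\mu}(\lambda_{\mu}\beta+\lambda^{*}_{\mu})}+\sigma_{S}^{2}\right].
\end{eqnarray*}
The cross covariance is the one appearing inside the proof of Theorem \ref{TheoremMiss}; letting $t\to\infty$ keeps only the first term and yields
\begin{eqnarray*}
\lim_{t\to\infty}\mathbb{C}\text{ov}[\widehat{\mu}_{t},\mu^{*}_{t}]=\frac{\lambda_{\mu}(\beta-1)(\sigma^{*}_{\mu})^{2}}{2\lambda^{*}_{\mu}(\lambda_{\mu}\beta+\lambda^{*}_{\mu})}.
\end{eqnarray*}

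The remaining work is purely algebraic. The key shortcut is the identity $(\sigma^{*}_{\mu})^{2}=(\lambda^{*}_{\mu})^{2}\sigma_{S}^{2}((\beta^{*})^{2}-1)$, which follows directly from the definition of $\beta^{*}$ in Equation (\ref{BetaFormula}). Substituting this into the ratio $\mathbb{C}\text{ov}_{\infty}/\mathbb{V}ar_{\infty}(\widehat{\mu})$, the common factors $\sigma_{S}^{2}$ cancel and the denominator collapses to $(\beta-1)[\lambda^{*}_{\mu}(\beta^{*})^{2}+\lambda_{\mu}\beta]$, producing exactly Equation (\ref{eq1}). The same substitution turns $\mathbb{C}\text{ov}^{2}_{\infty}/[\mathbb{V}ar^{\infty}_{\mu^{*}}\,\mathbb{V}ar_{\infty}(\widehat{\mu})]$ into $\lambda^{*}_{\mu}\lambda_{\mu}\beta((\beta^{*})^{2}-1)/[(\lambda^{*}_{\mu}+\lambda_{\mu}\beta)(\lambda_{\mu}\beta+\lambda^{*}_{\mu}(\beta^{*})^{2})]$, which is Equation (\ref{eq2}).

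Finally, the well-specified limit (\ref{CondDistribWellSpecified}) is obtained by setting $(\sigma_{\mu},\lambda_{\mu})=(\sigma^{*}_{\mu},\lambda^{*}_{\mu})$ and $\beta=\beta^{*}$ in (\ref{eq1})--(\ref{eq2}). The mean coefficient reduces to $\lambda^{*}_{\mu}\beta^{*}(\beta^{*}-1)(\beta^{*}+1)/[(\beta^{*}-1)\lambda^{*}_{\mu}\beta^{*}(1+\beta^{*})]=1$, and the variance bracket to $1-(\beta^{*}-1)/(\beta^{*}+1)=2/(\beta^{*}+1)$. The main obstacle is bookkeeping: the expressions coming out of Lemma \ref{VarianceFauxFiltre} and the proof of Theorem \ref{TheoremMiss} are bulky, so one has to track the cancellations carefully before the symmetric form in $(\beta,\beta^{*},\lambda_{\mu},\lambda^{*}_{\mu})$ emerges. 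No new analytical idea is required beyond joint Gaussianity and the two identities relating $\sigma_{\mu}$ to $\beta$.
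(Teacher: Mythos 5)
Your proof is correct and takes essentially the same route as the paper: joint Gaussianity of $(\mu^{*}_t,\hat{\mu}_t)$, the standard bivariate conditional formula, and passage to the limit using the stationary variance from Lemma \ref{VarianceFauxFiltre} and the covariance computed in the proof of Theorem \ref{TheoremMiss}. Your version simply carries out explicitly the algebra (in particular the identity $(\sigma^{*}_{\mu})^{2}=(\lambda^{*}_{\mu})^{2}\sigma_{S}^{2}((\beta^{*})^{2}-1)$) that the paper leaves implicit, and all the stated limits and simplifications check out.
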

\begin{proof}
Since the estimate $\hat{\mu}$ and the trend $\mu^{*}$ are two centred and correlated Gaussian processes (see Lemma \ref{VarianceFauxFiltre} and the proof of Theorem \ref{TheoremMiss}), the conditional law $\left( \mu^{*}_t|\hat{\mu}_t=x\right)$ is Gaussian with a mean and a variance given by:
\begin{eqnarray*}
\mathbb{M}_{\mu^{*}_t|\hat{\mu}_t}&=&\frac{\mathbb{C}ov\left(\hat{\mu}_t,\mu^{*}_t \right) }{\mathbb{V}ar\left[ {\hat{\mu}_t}\right] }x,\\
\mathbb{V}ar_{\mu^{*}_t|\hat{\mu}_t}&=&\mathbb{V}ar\left[ {\mu^*_t}\right] -\frac{\mathbb{C}ov\left(\hat{\mu}_t,\mu^{*}_t \right)^2 }{\mathbb{V}ar\left[ {\hat{\mu}_t}\right] }.
\end{eqnarray*}
Using Lemma \ref{VarianceFauxFiltre} and the expression of $\mathbb{C}ov\left(\hat{\mu}_t,\mu^{*}_t \right)$ in the proof of Theorem \ref{TheoremMiss}:
\begin{eqnarray*}
\lim\limits_{t \rightarrow \infty}	\mathbb{M}_{\mu^{*}_t|\hat{\mu}_t}&=&\mathbb{M}_{\mu^{*}|\hat{\mu}}^{\infty},\\
\lim\limits_{t \rightarrow \infty}	\mathbb{V}ar_{\mu^{*}_t|\hat{\mu}_t}&=&\mathbb{V}ar_{\mu^{*}|\hat{\mu}}^{\infty},
\end{eqnarray*}
and Equation (\ref{CondDistrib}) follows. Moreover, Equation (\ref{CondDistribWellSpecified}) is obtained by tending $\theta$ to $\theta^*$.
\end{proof}
The following proposition is a consequence of the previous proposition. It gives the asymptotic probability to have a positive trend, knowing a positive estimate equal to $x$.
\begin{prop}\label{PropositionProba}
	Consider the model \hyperref[Model]{(1)-(2)} with $\theta^*=\left(\sigma_{\mu}^*,\lambda_{\mu}^* \right) $ and the trend estimate defined in Equation (\ref{False Filter}). In this case:
	\begin{eqnarray}\label{Proba1}
		\lim\limits_{t \rightarrow \infty}	\mathbb{P}\left(\mu^{*}_t>0|\hat{\mu}_t=x \right)	=	\mathbb{P}_{\infty}\left(\mu^{*}>0|\hat{\mu}=x \right),
	\end{eqnarray}
	where
	\begin{eqnarray}\label{Proba}
	\mathbb{P}_{\infty}\left(\mu^{*}>0|\hat{\mu}=x \right) =1-\Phi\left(\frac{-\mathbb{M}_{\mu^{*}|\hat{\mu}=x}^{\infty}}{\sqrt{\mathbb{V}ar_{\mu^{*}|\hat{\mu}=x}^{\infty}}} \right), 
	\end{eqnarray}
	where $\mathbb{M}_{\mu^{*}|\hat{\mu}=x}^{\infty}$ and $\mathbb{V}ar_{\mu^{*}|\hat{\mu}=x}^{\infty}$ are defined in Equations (\ref{eq1}) and (\ref{eq2}), and $\Phi$ is the cumulative distribution function of the standard normal law. 
	
	Moreover, if $x>0$ and $\left(\sigma_{\mu},\lambda_{\mu} \right)=\left(\sigma_{\mu}^*,\lambda_{\mu}^* \right)$, this asymptotic probability becomes an increasing function of $\sigma_\mu^*$ and a decreasing function of $\lambda_\mu^*$.
\end{prop}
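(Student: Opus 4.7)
The plan is to read off Equation~(\ref{Proba}) as an immediate consequence of Proposition~\ref{ConditionalLaw}, and then to reduce the monotonicity statement to a one-variable analysis of the asymptotic conditional variance.

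First, I would argue the convergence in~(\ref{Proba1}). Proposition~\ref{ConditionalLaw} gives $\left( \mu^{*}_t|\hat{\mu}_t=x\right)\overset{\mathcal L}{\longrightarrow} \mathcal{N}\!\left(\mathbb{M}_{\mu^{*}|\hat{\mu}}^{\infty},\mathbb{V}ar_{\mu^{*}|\hat{\mu}}^{\infty}\right)$, and the limiting law has a continuous CDF, so $\mathbb{P}(\mu^*_t>0\mid \hat\mu_t=x)=1-F_t(0)$ converges to $1-\Phi\!\left(-\mathbb{M}_{\mu^{*}|\hat\mu}^{\infty}/\sqrt{\mathbb{V}ar_{\mu^{*}|\hat\mu}^{\infty}}\right)$, which is exactly~(\ref{Proba}). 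Strictly speaking I would record that the convergence of the conditional mean and variance to their finite asymptotic values (already obtained inside the proof of Proposition~\ref{ConditionalLaw}) is enough to write the limit by direct standardisation of the Gaussian tail.

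For the monotonicity, in the well-specified case $(\sigma_\mu,\lambda_\mu)=(\sigma_\mu^*,\lambda_\mu^*)$ the limiting law in~(\ref{CondDistribWellSpecified}) has mean~$x$ and variance $V:=2\mathbb{V}ar_{\mu^*}^{\infty}/(\beta^*+1)$, so the asymptotic probability reduces to $\Phi\!\left(x/\sqrt{V}\right)$. Since $\Phi$ is strictly increasing and $x>0$, the map $(\sigma_\mu^*,\lambda_\mu^*)\mapsto \mathbb{P}_\infty(\mu^*>0\mid \hat\mu=x)$ is monotone in each argument in the opposite direction of $V$. Using the identity $\beta^{*2}-1=(\sigma_\mu^*)^2/((\lambda_\mu^*)^2\sigma_S^2)$, one rewrites $V$ in the compact form
\begin{equation*}
V \;=\; \lambda_\mu^*\sigma_S^2(\beta^*-1) \;=\; \sqrt{(\lambda_\mu^*\sigma_S^2)^{2}+\sigma_S^2(\sigma_\mu^*)^{2}}\;-\;\lambda_\mu^*\sigma_S^2,
\end{equation*}
which displays $V$ as a function of the two parameters in a form well suited to differentiation.

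From the second form one immediately differentiates: $\partial V/\partial \sigma_\mu^*$ is the derivative of $\sqrt{a+b(\sigma_\mu^*)^2}$ in $\sigma_\mu^*$ and is strictly positive; $\partial V/\partial \lambda_\mu^*=\sigma_S^2\bigl(\lambda_\mu^*\sigma_S^2/\sqrt{\cdots}-1\bigr)$ is strictly negative since $\sqrt{(\lambda_\mu^*\sigma_S^2)^{2}+\sigma_S^2(\sigma_\mu^*)^{2}}>\lambda_\mu^*\sigma_S^2$. Combining with the fact that $x/\sqrt{V}$ is strictly decreasing in $V$ and $\Phi$ is strictly increasing, the sign of the derivative of the probability in each parameter is read off. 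The only delicate point (and the one I would check twice) is the direction of these derivatives, because the two competing dependencies of $V$ on $\sigma_\mu^*$ (through $\beta^*$ and through the explicit $\lambda_\mu^*$ factor) make the sign non obvious until one rewrites $V$ without the square root of a sum of squares, which is why I would carry the algebra through the simpler form $\sqrt{(\lambda_\mu^*\sigma_S^2)^2+\sigma_S^2(\sigma_\mu^*)^2}-\lambda_\mu^*\sigma_S^2$ rather than the form $\lambda_\mu^*\sigma_S^2(\beta^*-1)$ where $\beta^*$ itself depends on both parameters.
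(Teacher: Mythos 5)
Your first half is fine and is exactly the paper's route: Equations (\ref{Proba1})--(\ref{Proba}) follow from Proposition \ref{ConditionalLaw} by weak convergence plus continuity of the Gaussian distribution function. Your reduction of the monotonicity claim to a one-variable analysis of the asymptotic conditional variance is also the paper's approach: the paper works with $f(\sigma_\mu^*,\lambda_\mu^*,\sigma_S)=(\sigma_\mu^*)^2/\bigl(\lambda_\mu^*(1+\beta^*)\bigr)$, which is the same quantity as your $V=\lambda_\mu^*\sigma_S^2(\beta^*-1)=\sqrt{(\lambda_\mu^*\sigma_S^2)^2+\sigma_S^2(\sigma_\mu^*)^2}-\lambda_\mu^*\sigma_S^2$, and both computations give the same correct signs $\partial V/\partial\sigma_\mu^*>0$ and $\partial V/\partial\lambda_\mu^*<0$.

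The genuine gap is the step you leave as ``the sign \ldots is read off,'' and it is fatal. Read it off: the limit law in (\ref{CondDistribWellSpecified}) has mean $x$ and variance $V$, so the probability equals $\Phi\bigl(x/\sqrt{V}\bigr)$, which for fixed $x>0$ is strictly \emph{decreasing} in $V$; combined with your derivative signs, the probability is decreasing in $\sigma_\mu^*$ and increasing in $\lambda_\mu^*$ --- the \emph{opposite} of what Proposition \ref{PropositionProba} asserts, so your argument, carried to completion, refutes the stated claim rather than proving it. No rewriting of $V$ can repair this, because the statement as given (fixed threshold $x>0$) is false: with $\sigma_S=0.3$ and $x=0.01$, the pair $(\lambda_\mu^*,\sigma_\mu^*)=(1,\,0.1)$ gives $V\approx 0.005$ and probability $\approx 0.56$, while $(1,\,0.9)$ gives $V\approx 0.19$ and probability $\approx 0.51$. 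You should know that the paper's own proof commits exactly this sign error: it computes the same two derivative signs for $f$ and then asserts the claimed directions, implicitly treating $\Phi\bigl(x/\sqrt{f}\bigr)$ as increasing in $f$. The claimed monotonicity is recovered only if the threshold is not held fixed but scales with the filter, as in the paper's simulations: taking $x=\sqrt{\mathbb{V}ar[\hat{\mu}^*]}$ and using $\mathbb{V}ar[\hat{\mu}^*]=\mathbb{V}ar[\mu^*]-V$ (orthogonality of the well-specified filter), the probability becomes $\Phi\bigl(\sqrt{(\beta^*-1)/2}\bigr)$, which is indeed increasing in $\sigma_\mu^*$ and decreasing in $\lambda_\mu^*$. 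A correct write-up must either prove that rescaled version or state the fixed-$x$ monotonicity with the signs reversed; as it stands, your proposal (like the paper's proof) does not establish the proposition.
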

\begin{proof}
Equations (\ref{Proba1}) and (\ref{Proba}) follow from Proposition \ref{ConditionalLaw}. Now, consider  the well-specified case $\left(\sigma_{\mu},\lambda_{\mu} \right)=\left(\sigma_{\mu}^*,\lambda_{\mu}^* \right)$ and $x>0$. Using Equation (\ref{CondDistribWellSpecified}), it follows that:
\begin{eqnarray*}
\mathbb{V}ar_{\mu^{*}|\hat{\mu}^*=x}^{\infty}=f\left( \sigma_{\mu}^*,\lambda_{\mu}^*,\sigma_S\right), 
\end{eqnarray*}
where
\begin{eqnarray*}
f\left( \sigma_{\mu}^*,\lambda_{\mu}^*,\sigma_S\right) =\frac{\left( \sigma_{\mu}^*\right)^2 }{\lambda_{\mu}^*\left(1+\sqrt{1+\frac{\left( \sigma_{\mu}^*\right)^2}{\sigma_S^2\left( \lambda_{\mu}^*\right)^2}} \right) }.
\end{eqnarray*}
Since
\begin{eqnarray*}
\frac{\partial f}{\partial \lambda_{\mu}^*}\left( \sigma_{\mu}^*,\lambda_{\mu}^*,\sigma_S\right)&=&\frac{-\left( \sigma_{\mu}^*\right)^2 }{\left( \lambda_{\mu}^*\right) ^2\left(1+\sqrt{1+\frac{\left( \sigma_{\mu}^*\right)^2}{\sigma_S^2\left( \lambda_{\mu}^*\right)^2}}+\frac{\left( \sigma_{\mu}^*\right)^2}{\sigma_S^2} \right) }\leq 0,\\
\frac{\partial f}{\partial \sigma_{\mu}^*}\left( \sigma_{\mu}^*,\lambda_{\mu}^*,\sigma_S\right)&=&\frac{\lambda_{\mu}^* \sigma_{\mu}^* \sigma_S^2 \sqrt{1+\frac{\left( \sigma_{\mu}^*\right)^2}{\sigma_S^2\left( \lambda_{\mu}^*\right)^2}} }{\left( \sigma_{\mu}^*\right)^2 +\sigma_S^2\left( \lambda_{\mu}^*\right) ^2 } \geq 0,
\end{eqnarray*}
the asymptotic well-specified probability to have a positive trend, knowing a positive estimate equal to $x$ is an increasing function of $\sigma_\mu^*$ and a decreasing function of $\lambda_\mu^*$. 
\end{proof}
\begin{remarque}\label{RemarkProba}
This probability is an increasing function of $x$. Indeed, it is easier to detect the sign of the real trend with a high estimate than with a low estimate. Moreover, this probability is always superior to 0.5. This is due to the non-zero correlation between the trend and the filter.
As shown in the previous sections, trend filtering is easier with a small spot volatility. Here, the probability to make a good detection is also a decreasing function of the $\sigma_S$. 
\end{remarque}
\section{Simulations}
In this section, numerical examples are computed in order to make the reader aware of the trend filtering problem. First, the feasibility of trend forecasting with statistical estimator is illustrated on different trend regimes. Then, the effects of a bad forecast on trend filtering and on the detection of a positive trend are also discussed.
\subsection{Feasibility of trend forecasting}
Suppose that only discrete time observations are available and that the discrete time step is equal to $\delta=1/252$. In this case, the agent uses the daily returns of the risky asset to calibrate the trend.
We also assume that the agent uses an unbiased estimator. Given $T$ years of observations, The Cramer Rao Bound is given by:
\begin{eqnarray*}
CRB_T\left( \theta\right)= \frac{I_1^{-1}\left( \theta\right)  }{T*252},
\end{eqnarray*}
where $I_1\left( \theta\right)$ is given by Theorem \ref{TheoCRB}. The smallest confidence region is obtained with this matrix. In practice, the real values of the parameters $\theta$ are unknown and asymptotic confidence regions are computed (replacing $\theta$ by the estimates $\hat{\theta}$ in the Fisher information matrix $I_1\left( \hat{\theta}\right)$). 
Since the goal of this subsection is to evaluated the feasibility of this problem, we suppose that we know the real values of the parameters. Then the real Cramer Rao Bound can be computed. Suppose that a target standard deviation $x_i$ is fixed for the parameter $\theta_i$. In this case, to reach the precision $x_i$, the length of the observations must be superior to: 
\begin{eqnarray*}
T^x_i = \frac{\left(I_1^{-1}\left( \theta\right) \right)_{ii} }{252*x_i^2}.
\end{eqnarray*}
We consider a fixed spot volatility $\sigma_S =30\%$, two target precisions for each parameter $\theta_i$ and we compute $T^x_i$ for several configurations. The figures \hyperref[Figu1]{1}, \hyperref[Figu2]{2}, \hyperref[Figu3]{3} and \hyperref[Figu4]{4} represent the results. It is well known that for a high measurement noise, which means a high spot volatility, the problem is harder because of a low signal-to-noise ratio. The higher the volatility, the longer the observations must be. Here, we observe that with a higher drift volatility $\sigma_{\mu}$ and a lower $\lambda_{\mu}$, the problem is easier. Indeed, the drift takes higher values and is more detectable. Moreover, the simulations show that the classical estimators are not adapted to a so weak signal-to-noise ratio. Even after a long period of observations,  the estimators exhibit high variances. Indeed the smallest period of observations is superior to 29 years. It corresponds to a target standard deviation equal to $0.5$ for a real parameter $\lambda_\mu=1$ and a trend standard deviation equal to $\sigma_\mu\left( 2\lambda_\mu\right)^{-1/2} \approx 63\%$. Therefore, for this configuration, after $30$ years of observations, the standard deviation is equal to $50\%$ of the real parameter value $\lambda_\mu$. After $742$ years, this standard deviation is equal to $10\%$.
Even with this kind of regime, the trend forecast with a good precision is impossible.
\begin{figure}[H]
\begin{center}
   \includegraphics[totalheight=7cm]{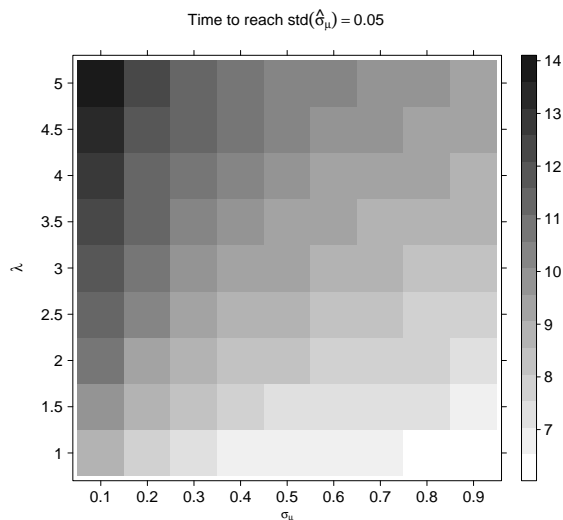}   
  \caption{Time to reach a target standard deviation on $\sigma_{\mu}$ equal to $0.05$ (ln(years)) }
\end{center}\label{Figu1}
\end{figure}
\begin{figure}[H]
\begin{center}
   \includegraphics[totalheight=7cm]{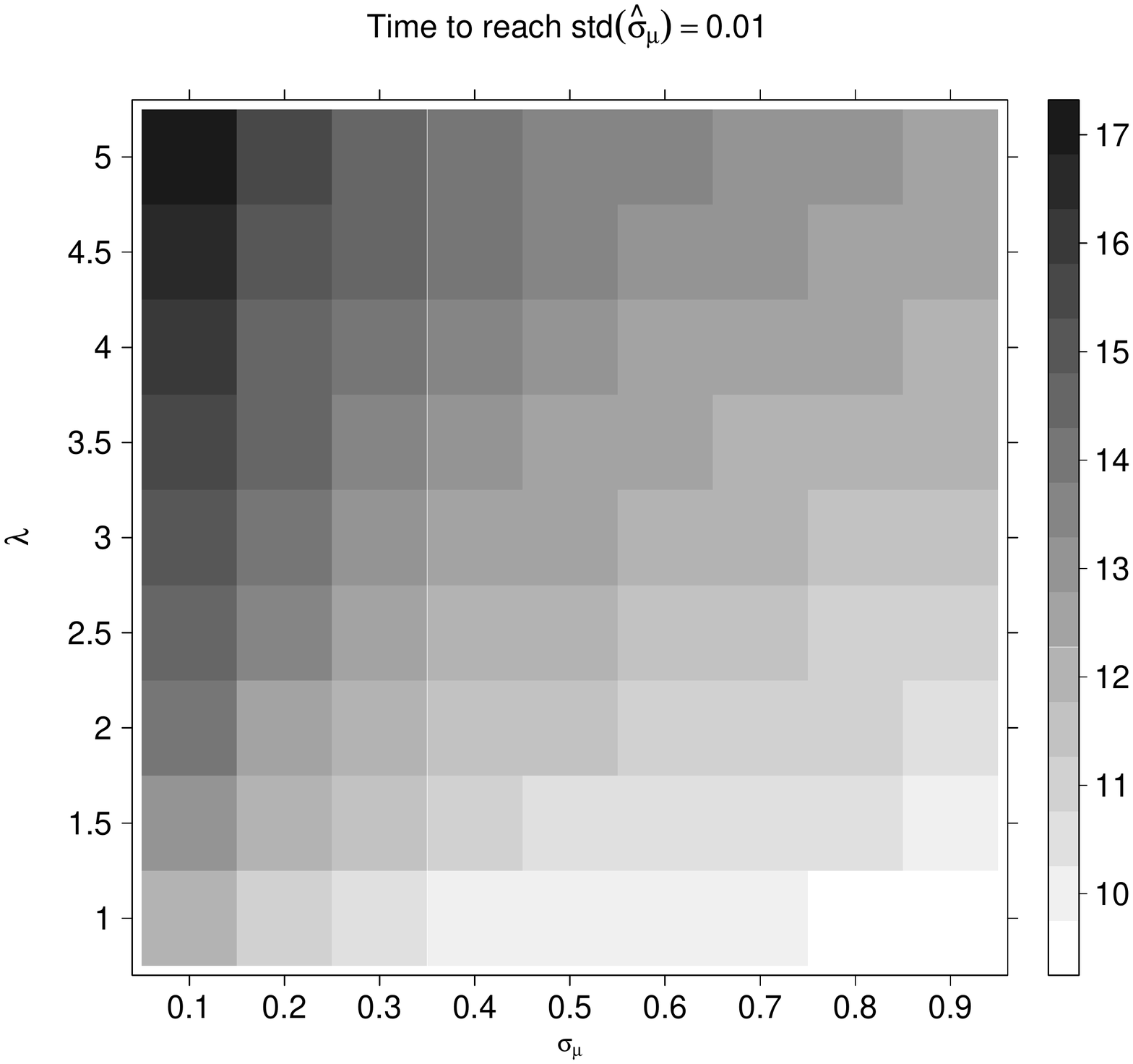}   
  \caption{Time to reach a target standard deviation on $\sigma_{\mu}$ equal to $0.01$ (ln(years)) }
\end{center}\label{Figu2}
\end{figure}
\begin{figure}[H]
\begin{center}
    \includegraphics[totalheight=7cm]{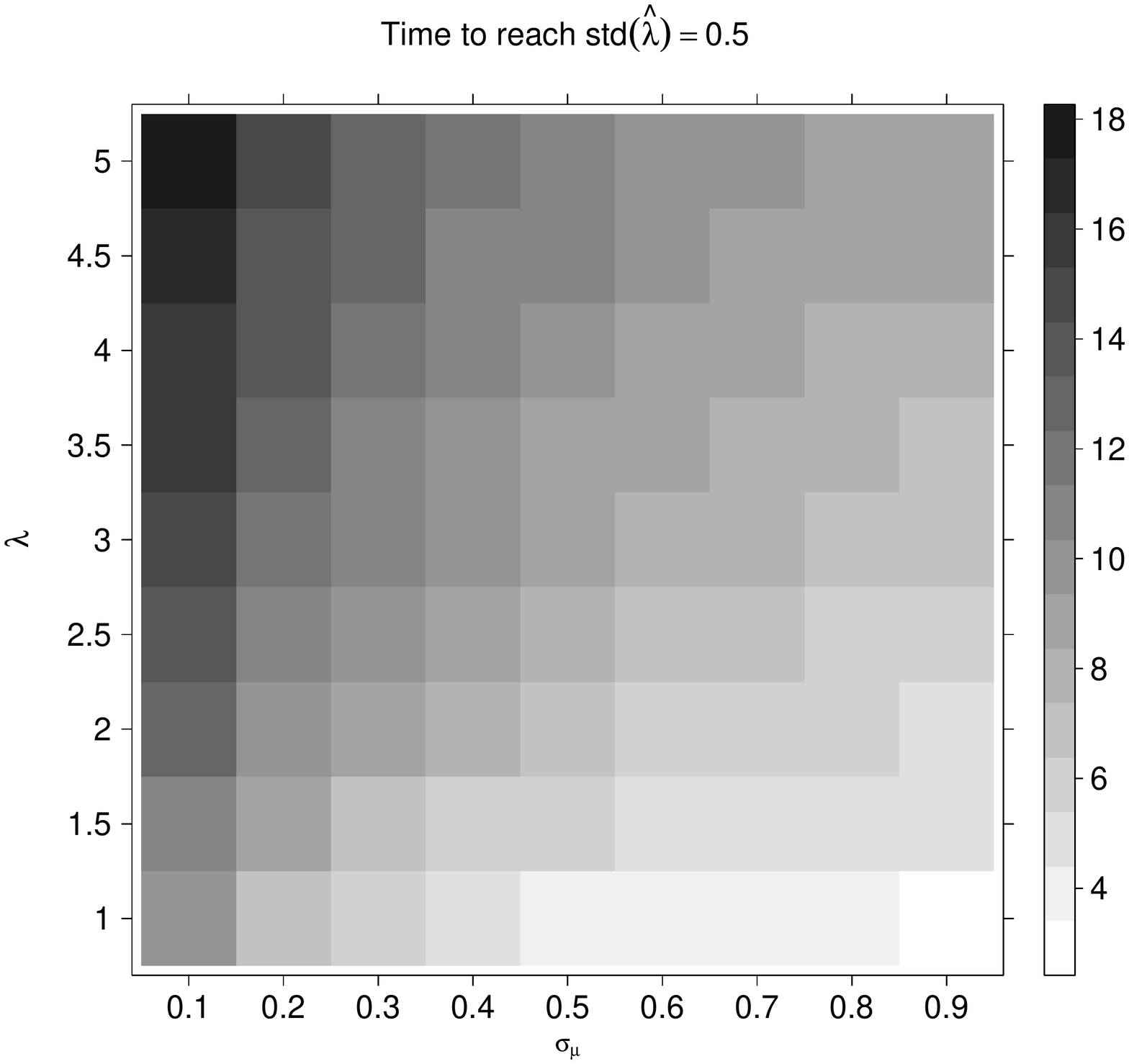}    
   \caption{Time to reach a target standard deviation on $\lambda_{\mu}$ equal to $0.5$ (ln(years)) }
\end{center}
\end{figure}\label{Figu3}
\begin{figure}[H]
\begin{center}
    \includegraphics[totalheight=7cm]{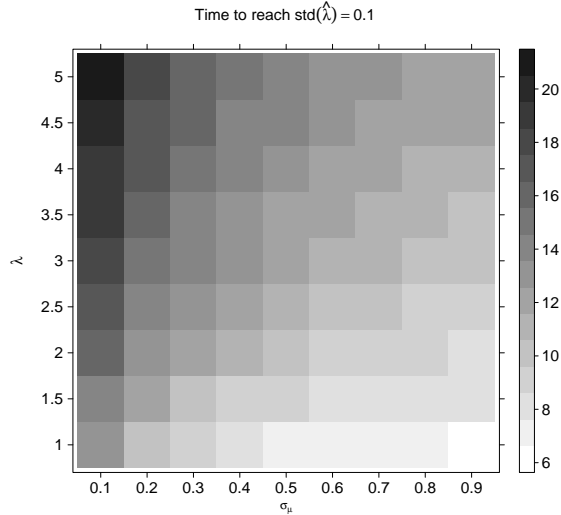}    
   \caption{Time to reach a target standard deviation on $\lambda_{\mu}$ equal to $0.1$ (ln(years)) }
\end{center}\label{Figu4}
\end{figure}

\subsection{Impact of parameters mis-specification on trend filtering}
This subsection illustrates the impact of parameters mis-specification on trend filtering. Using the results of Theorem \ref{TheoremMiss}, we represent, for different configurations, and for the well- and mis- specified case, the asymptotic standard deviation of the residuals between the trend and the filter. 
The figures \hyperref[Figu5]{5} and \hyperref[Figu6]{6} represent the asymptotic standard deviation of the trend and of the residuals in the well-specified case (the agent uses the real values of the parameters) for different configurations. As seen in Equation (\ref{EquationAsymptoticResidualsWell}), the asymptotic standard deviation of the well-specified residuals is an increasing function of the drift volatility $\sigma_{\mu}^*$ and a decreasing function of the parameter $\lambda_{\mu}^*$. 
For $\lambda_{\mu}^*=1$ and $\sigma_{\mu}^*=90\%$, the standard deviation of the residuals ($\simeq 44\%$) is inferior to the standard deviation of the trend ($\simeq 64\%$).
For a high $\lambda_{\mu}^*$ and a small drift volatility, the two quantities are approximately equal.  
This figure leads to the same conclusions than Equation (\ref{RatioVariance}). Indeed, like the calibration problem, the problem of trend filtering is easier with a small $\lambda_{\mu}^*$ and a high drift volatility $\sigma_{\mu}^*$.

 Now consider the worst configuration $\sigma_S=30\%$, $\lambda_{\mu}^{*}=5$ and $\sigma_{\mu}^{*}=10\%$. The figure \hyperref[Figu7]{7} represents the asymptotic standard deviation of the residuals for different estimates $\left( \lambda_{\mu},\sigma_{\mu}\right) $. This regime corresponds to a standard deviation of the trend equal to $\sigma^{*}_\mu\left( 2\lambda^{*}_\mu\right)^{-1/2} \approx 3.2\%$ and to a standard deviation of the residuals equal to $3.16\%$ in the well-specified case. If the agent implements the Kalman filter with  $\lambda_{\mu}=1$ and $\sigma_{\mu}=90\%$, the standard deviation of the residuals becomes superior to $25\%$. Finally, consider the best configuration $\sigma_S=30\%$, $\lambda_{\mu}^{*}=1$ and $\sigma_{\mu}^{*}=90\%$. The figure \hyperref[Figu8]{8} represents the asymptotic standard deviation of the residuals for different estimates $\left( \lambda_{\mu},\sigma_{\mu}\right) $. This regime corresponds to a trend standard deviation equal to $\sigma^{*}_\mu\left( 2\lambda^{*}_\mu\right)^{-1/2} \approx 63\%$ and to a standard deviation of the residuals equal to $44\%$ in the well-specified case. If the agent implements the Kalman filter with  $\lambda_{\mu}=5$ and $\sigma_{\mu}=10\%$, the standard deviation of the residuals becomes superior to $60\%$. 
Even with a good regime, the impact of parameters mis-specification on trend filtering is not negligible.
\begin{figure}[H]
\begin{center}
    \includegraphics[totalheight=7cm]{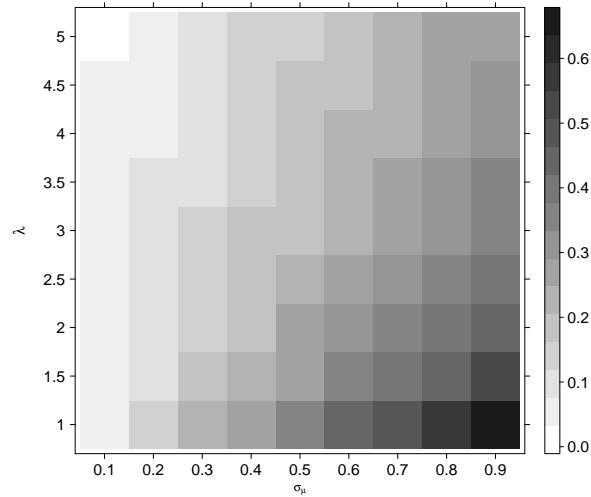}\\
    \caption{Asymptotic standard deviation of the trend as a function of the trend parameters with $\sigma_S=30\%$.}
\end{center}
\label{Figu5}
\end{figure}
\begin{figure}[H]
\begin{center}
    \includegraphics[totalheight=7cm]{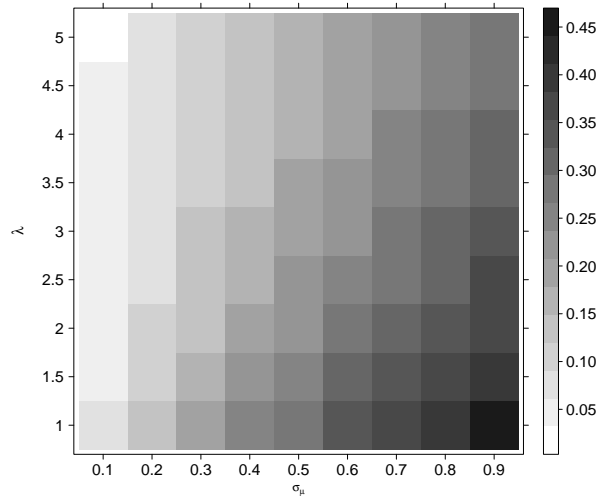}
    \caption{Asymptotic standard deviation of the residuals of the well-specified Kalman filter as a function of the trend parameters with $\sigma_S=30\%$.}
\end{center}
\label{Figu6}
\end{figure}
\begin{figure}[H]
\begin{center}
    \includegraphics[totalheight=7cm]{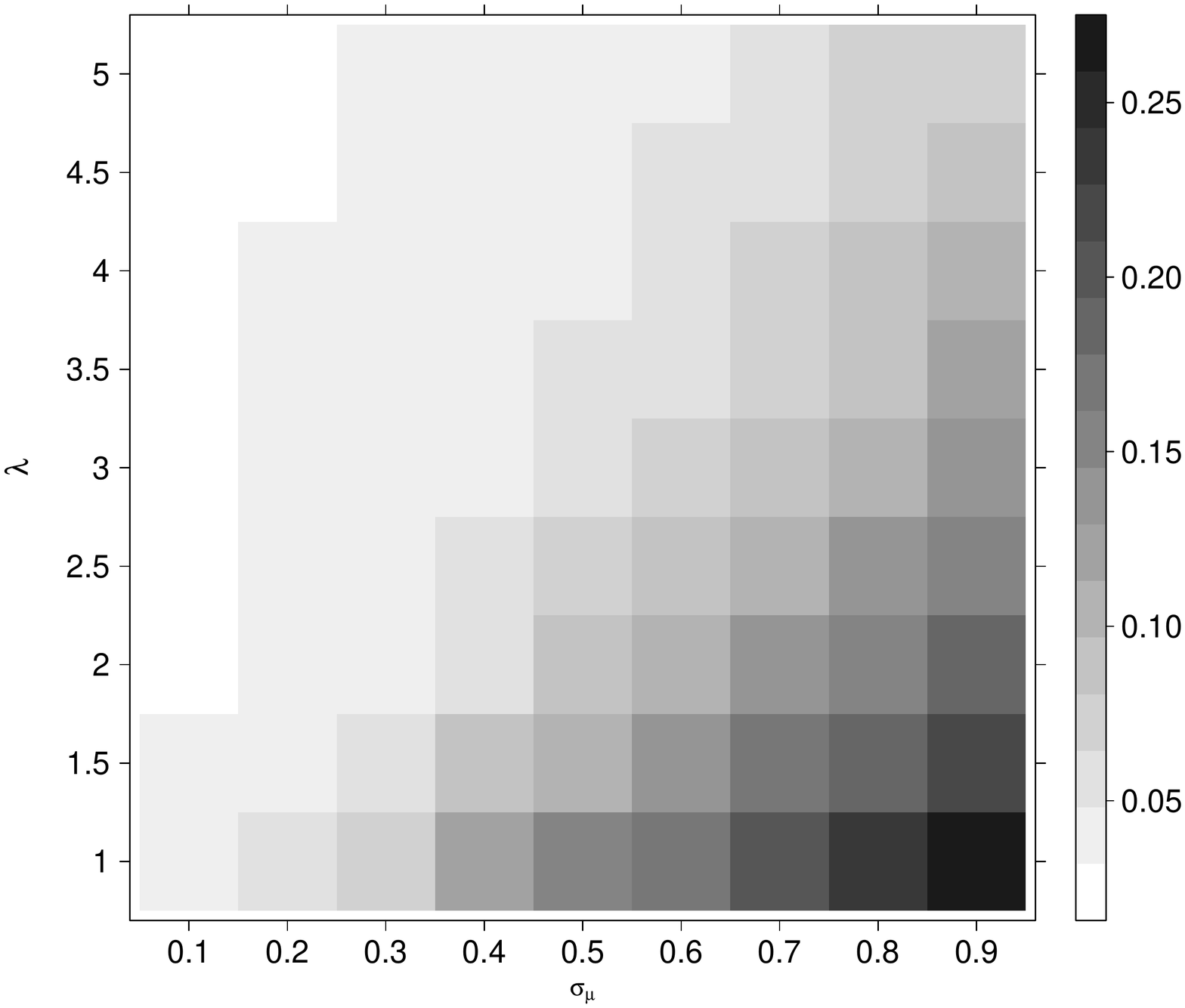}\\
    \caption{Asymptotic standard deviation of the residuals of the mis-specified Kalman filter as a function of the trend estimate parameters with $\sigma_S=30\%$, $\lambda_{\mu}^{*}=5$ and $\sigma_{\mu}^{*}=10\%$.}
\end{center}
\label{Figu7}
\end{figure}
\begin{figure}[H]
\begin{center}
    \includegraphics[totalheight=7cm]{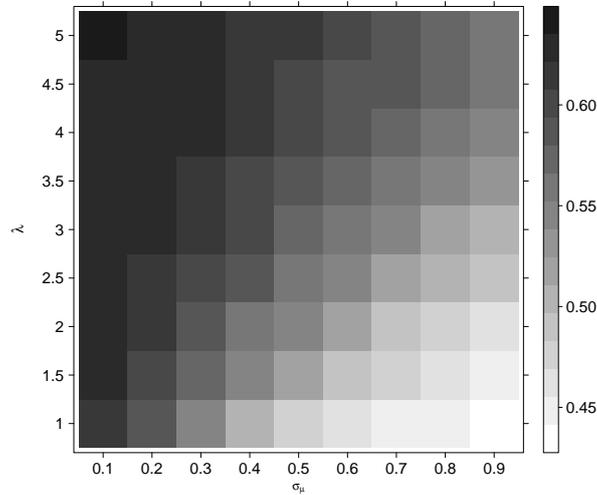}\\
    \caption{Asymptotic standard deviation of the residuals of the mis-specified Kalman filter as a function of the trend estimate parameters with $\sigma_S=30\%$, $\lambda_{\mu}^{*}=1$ and $\sigma_{\mu}^{*}=90\%$.}
\end{center}
\label{Figu8}
\end{figure}

\subsection{Detection of a positive trend}
In this subsection, using Equation (\ref{Proba}), the asymptotic probability to have a positive trend, knowing a trend estimate equal to a threshold $x$ is illustrated. In order to compare this probability for different trend regimes, we choose a threshold equal to the standard deviation of the filter $\hat{\mu}$. First, this quantity is tractable in practice. Moreover, since the continuous time mis-specified filter $\hat{\mu}$ is a centered Gaussian process, the probability that $\hat{\mu}$ becomes superior (or inferior) to its standard deviation is independent of the parameters $\left( \sigma_\mu^*,\lambda_\mu^*,\sigma_\mu,\lambda_\mu,\sigma_S \right) $ .
First, suppose that the agent uses the real values of the parameters and consider the asymptotic probability $\mathbb{P}\left(\mu^{*}>0|\hat{\mu}^{*}=\right. $ $ \left. \sqrt{\mathbb{V}_{\hat{\mu}^{*}}} \right)$ to have a positive trend, knowing an estimate equal to its standard deviation.
The figure \hyperref[Figu9]{9} represents this probability for different regimes. As seen in Proposition \ref{PropositionProba}, in the well-specified case, this probability is an increasing function of the trend volatility $\sigma_\mu^*$ and a decreasing function of $\lambda_\mu^*$. 
Like the calibration and the filtering problem, the detection is easier with a small $\lambda_{\mu}^*$ and a high drift volatility.
Now, suppose that the agent uses wrong estimates $\left(\sigma_{\mu},\lambda_{\mu} \right)$. In this case, the agent implements the continuous time mis-specified Kalman filter.
The figures \hyperref[Figu10]{10} and \hyperref[Figu11]{11} represent the asymptotic probability $\mathbb{P}\left(\mu^{*}>0|\hat{\mu}=\sqrt{\mathbb{V}_{\hat{\mu}}} \right)$ for the best and the worst configuration of the figure \hyperref[Figu9]{9}. As explained in Remark \ref{RemarkProba}, this probability is always superior to 0.5, even with a bad calibration of the parameters. For each case, the probability to have a positive trend, knowing an estimate equal to its standard deviation does not vary a lot with an error on the parameters. This quantity seems to be robust to parameters mis-specifications.
\begin{figure}[H]\label{probWellSpecified}
\begin{center}
    \includegraphics[totalheight=7cm]{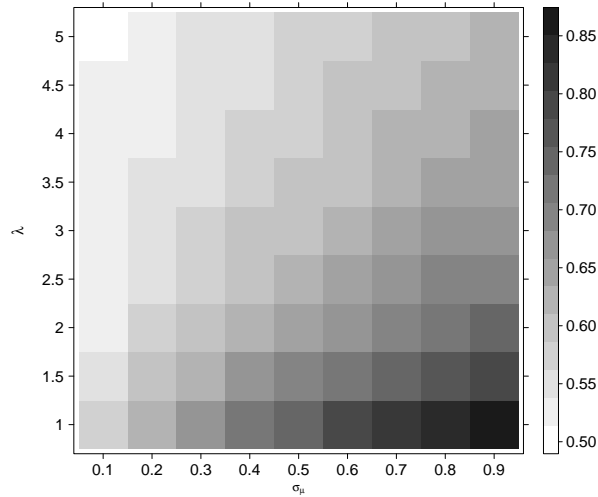}\\
    \caption{Asymptotic probability to have a positive trend given a well-specified estimate equal to its standard deviation with $\sigma_S=30\%$.}
\end{center}\label{Figu9}
\end{figure}
\begin{figure}[H]
\begin{center}
    \includegraphics[totalheight=7cm]{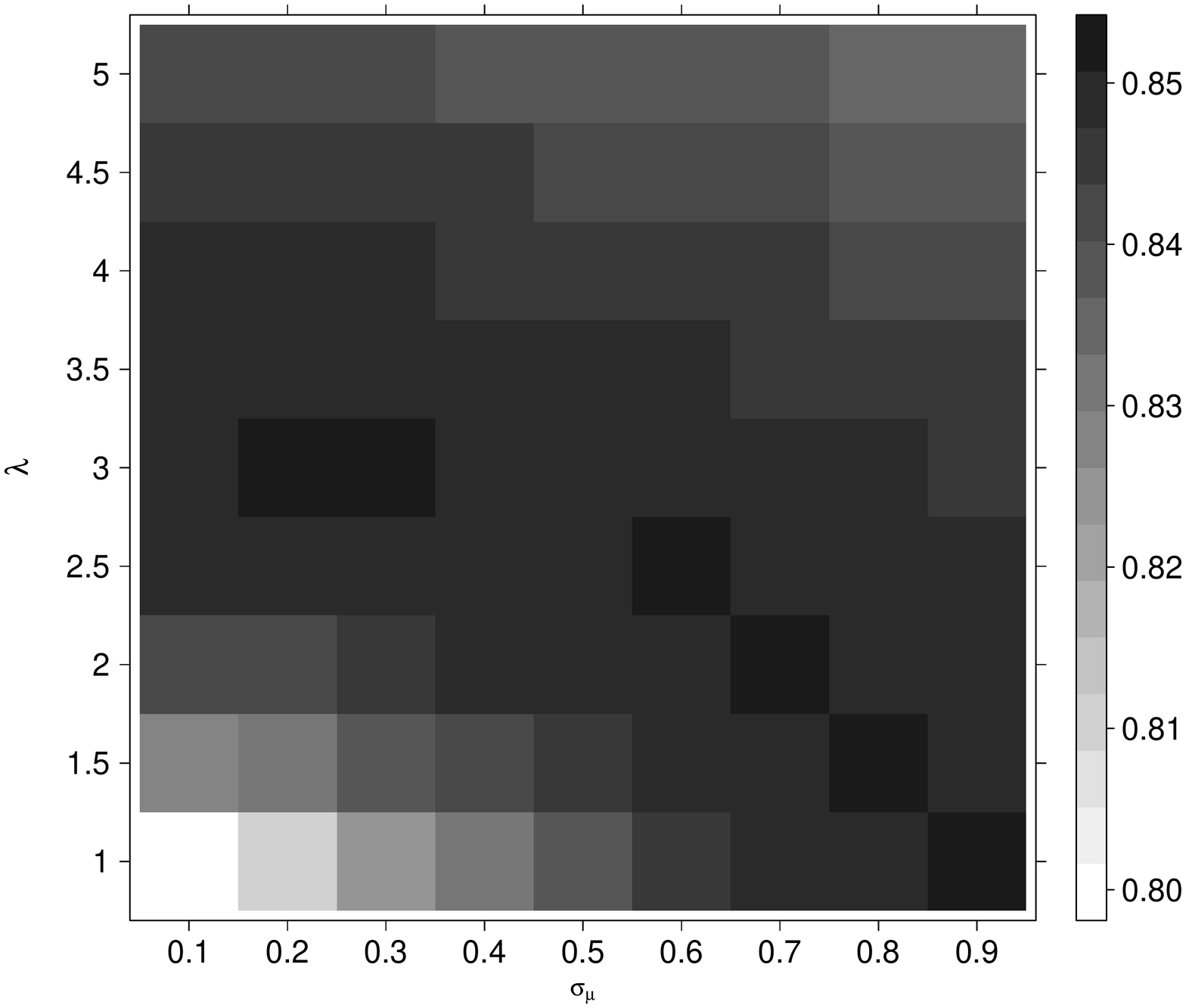}\\
    \caption{Asymptotic probability to have a positive trend given a mis-specified estimate equal to its standard deviation with $\sigma_S=30\%$, $\lambda_{\mu}^{*}=1$ and $\sigma_{\mu}^{*}=90\%$.}
\end{center}
\label{Figu10}
\end{figure}
\begin{figure}[H]
\begin{center}
    \includegraphics[totalheight=7cm]{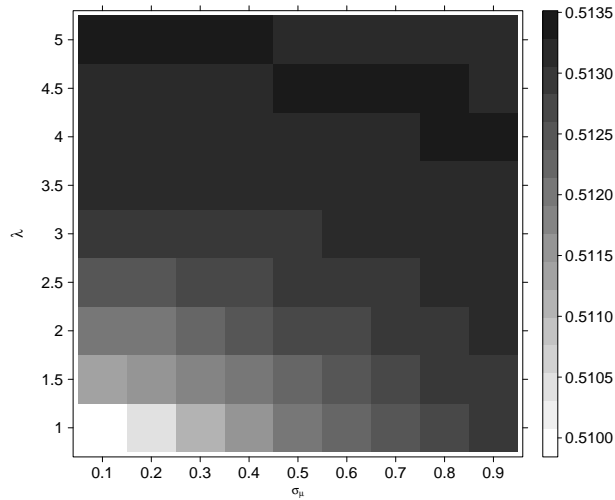}\\
    \caption{Asymptotic probability to have a positive trend given a mis-specified estimate equal to its standard deviation with $\sigma_S=30\%$, $\lambda_{\mu}^{*}=5$ and $\sigma_{\mu}^{*}=10\%$.}
\end{center}
\label{Figu11}
\end{figure}
\newpage
\section{Conclusion}
The present work tries to illustrate the difficulty of trend filtering with a model based on an unobserved mean-reverting diffusion.
This model belongs to the class of Linear Gaussian Space State models.
The advantage of this kind of system is to have an on-line method of estimation: the Kalman filter.

In practice, the parameters of the model are unknown, then the calibration of filtering parameters is crucial.
The linear and Gaussian case allows to compute, in closed form, the likelihood. The Kalman filter can also be used for this calculus. These methods can be generalized to a non-constant volatility and classical estimators can be easily put in practice. 

Although this framework is particularly convenient for forecasting, the results of the  analysis  show that the classical estimators are not adapted to a so weak signal-to-noise ratio.
Moreover, the linear and Gaussian model has deficiencies. In practice, financial asset returns are heavy-tailed (see \cite{Mandelbrot}) because of jumps, volatility fluctuations. So, the stationarity is not guaranteed on a long horizon of observations. Then, the problem of trend filtering with real data is harder than with the Linear and Gaussian framework.

With this simple model, the horizons of observations needed for a acceptable precision are too long. Therefore, the convergence is not guaranteed and the impact of mis-specification on trend filtering is not negligible. We surely conclude that it is impossible to estimate the trend with a good precision. Despite these difficulties, the non-zero correlation between the trend and the estimate (mis-specified or not) can be used for the detection of a positive (or negative) trend.

\newpage
\section*{Appendix A: discrete Kalman filter}
\label{sec::KalmanFilter}
\subsection*{Framework}
This section is based on \cite{KALMAN}. 
The discrete Kalman filter is a recursive method. Consider two objects: the observations $\left\lbrace y_{k} \right\rbrace$ and the states of the system $\left\lbrace x_{k} \right\rbrace$. This filter is based on a Gauss-Markov first order model. Consider the following system:
\begin{eqnarray*}
x_{k+1}&=&F_{k}x_{k}+v_{k},\\
y_{k}&=&H_{k}x_{k}+u_{k}.
\end{eqnarray*}
The first equation is an $a\ priori$ model, the transition equation of the system. The matrix $F_{k}$ is the transition matrix and $v_{k}$ is the transition noise. The second equation is the measurement equation. The matrix $H_{k}$ is named the measurement matrix and  $u_{k}$ is the measurement noise. 
The aim is to identify the underlying process $\left\lbrace x_{k} \right\rbrace$. The two noises are supposed white, Gaussian, centered and decorrelated. In particular:
\begin{equation*}
\mathbb{E} \left[ \left(
\begin{array}{c}
u_{k}\\
v_{k}\\
\end{array}
\right) \left(
\begin{array}{c}
u_{l}\\
v_{l}\\
\end{array}
\right)^{T}\right] = \left(
\begin{array}{cc}
R^{u}_{k}		& 0\\
0		& R^{v}_{k}\\
\end{array}
\right) \delta_{kl}.
\end{equation*}
The two noises are also supposed independent of $x_{k}$ and the initial state is Gaussian. So, it can be proved with a recurrence that all states are Gaussian. Therefore, just the mean and the covariance matrix are needed for the characterization of the state.
The estimation is given by two steps. The first one is an $a\ priori$ estimation given $\hat{x}_{k+1/k}$ and $\Gamma_{k+1/k}=\mathbb{E} \left[ (x_{k+1}-\hat{x}_{k+1/k})(x_{k+1}-\hat{x}_{k+1/k})^{T}\right]$.
When the new observation is available, a correction of the estimation is done to obtain $\hat{x}_{k+1/k+1}$ and $\Gamma_{k+1/k+1}=\mathbb{E} \left[ (x_{k+1}-\hat{x}_{k+1/k+1})(x_{k+1}- \hat{x}_{k+1/k+1})^{T}\right]$. This is the $a\ posteriori$ estimation.  The criterion considered for the $a\ posteriori$ estimation is the least squares method, which corresponds to the minimization of the trace of $\Gamma_{k+1/k+1}$.
\subsection*{Filter}
The prediction ($a\ priori$ estimation) is given by
\begin{eqnarray*}
\hat{x}_{k+1/k}&=&F_{k}\hat{x}_{k/k},\\
\Gamma_{k+1/k}&=&F_{k}\Gamma_{k/k}F_{k}^{T}+R^{v}_{k}.
\end{eqnarray*}
The $a\ posteriori$ estimation is a correction of the $a\ priori$ estimation. A gain is introduced to do this correction:
\begin{eqnarray*}
\hat{x}_{k+1/k+1}=\hat{x}_{k+1/k}+K_{k+1} \left( y_{k+1}-H_{k+1} \hat{x}_{k+1/k} \right). 
\end{eqnarray*}
As explained above, the gain $K_{k+1}$ is found by least squares method, which corresponds to
\begin{eqnarray*}
\frac{\partial trace\left( \Gamma_{k+1/k+1} \right) }{\partial K_{k+1}}=0.
\end{eqnarray*}
With the classical lemma of derivation for matrix, the gain is found:
\begin{eqnarray*}
K_{k+1}&=&\Gamma_{k+1/k} H_{k+1}^{T} \left[ H_{k+1}\Gamma_{k+1/k}H_{k+1}^{T} + R^{u}_{k+1}\right]^{-1},\\
\Gamma_{k+1/k+1}&=&\left( I_{d}-K_{k+1}H_{k+1} \right) \Gamma_{k+1/k}.
\end{eqnarray*}
\newpage
\section*{Appendix B: Iterative methods for the inverse and the determinant of the covariance matrix}
\label{sec::IterativeLikelihood}
In this appendix, we provide iterative methods for the inverse and the determinant of the covariance matrix.
\subsection*{Inverse of the covariance matrix}
The use of the Matrix Inversion Lemma on Equation (\ref{Cov}) gives:
\begin{eqnarray*}
\Sigma_{y_{1:N}|\theta}^{-1}=\Sigma_{\mu_{1:N}|\theta}^{-1}-\Sigma_{\mu_{1:N}|\theta}^{-1}A_{N}^{-1}\Sigma_{\mu_{1:N}|\theta}^{-1},
\end{eqnarray*}
where $A_N=\frac{\delta}{\sigma_{S}^{2}}I_N+\Sigma_{\mu_{1:N}|\theta}^{-1}$. Then, we have to compute the inverse of the matrices $A_N$ and $\Sigma_{\mu_{1:N}|\theta}$.
\subsubsection*{Inverse of the matrix $A_N$}
Suppose that $A_{N}^{-1}$ is computed. The matrix $A_{N+1}$ can be broken into four sub-matrices:
\begin{eqnarray*}
A_{N+1}=\left(
\begin{array}{cc}
B_1 & B_2\\
B_3 & B_4
\end{array}
\right),
\end{eqnarray*} 
where
\begin{eqnarray*} B_1 & = & \frac{\delta}{\sigma_{S}^{2}}+\frac{2\lambda_{\mu}\left(e^{\lambda_{\mu}\delta}+e^{-\lambda_{\mu}\delta} \right)}{\sigma_{\mu}^{2}\left(e^{\lambda_{\mu}\delta}-e^{-\lambda_{\mu}\delta} \right)},\\
B_2 & = & \left(
\begin{array}{cccc}
 \frac{-2\lambda_{\mu}}{\sigma_{\mu}^{2}\left(e^{\lambda_{\mu}\delta}-e^{-\lambda_{\mu}\delta} \right) } & 0 & \cdots & 0\end{array}\right),\\
B_3 & = & B_2^{T},\\
B_4 & = & A_{N}.
\end{eqnarray*} 
Therefore, the matrix $A_{N+1}$ can be inverted blockwise.
\subsubsection*{Inverse of the matrix $\Sigma_{\mu_{1:N}|\theta}$}
The following lemma is used (see \cite{OuMatriceInversion} for details):
\begin{lemme}\label{Lemma}
Let $\mu$ be an Ornstein Uhlenbeck process with parameters $\theta=\left(\lambda_{\mu},\sigma_{\mu}\right)$. The covariance matrix of $\mu_1,..,\mu_N$ is $\Sigma_{\mu_{1:N}|\theta}$. Then: 
\begin{eqnarray*}
&\scriptscriptstyle \Sigma_{\mu_{1:N}|\theta}^{-1}=\frac{2\lambda_{\mu}}{\sigma_{\mu}^{2}\left(e^{\lambda_{\mu}\delta}-e^{-\lambda_{\mu}\delta} \right) } B_N,&\\
& \scriptscriptstyle B_N=\scriptscriptstyle\left(
\scriptscriptstyle\begin{array}{cccccc}
\scriptscriptstyle e^{\lambda_{\mu}\delta}+e^{-\lambda_{\mu}\delta}&\scriptscriptstyle -1 &\scriptscriptstyle 0 &\scriptscriptstyle \cdots  &\scriptscriptstyle \cdots &\scriptscriptstyle 0\\
\scriptscriptstyle -1& \scriptscriptstyle e^{\lambda_{\mu}\delta}+e^{-\lambda_{\mu}\delta}&\scriptscriptstyle -1 &\scriptscriptstyle \ddots &  &\scriptscriptstyle \vdots\\
\scriptscriptstyle 0& \scriptscriptstyle -1&  \scriptscriptstyle e^{\lambda_{\mu}\delta}+e^{-\lambda_{\mu}\delta} &\scriptscriptstyle -1 &  &\scriptscriptstyle  \vdots\\
\scriptscriptstyle\vdots&\scriptscriptstyle \ddots &\scriptscriptstyle \ddots &\scriptscriptstyle \ddots & \scriptscriptstyle\ddots &\scriptscriptstyle \vdots\\
\scriptscriptstyle\vdots&  &\scriptscriptstyle -1 & \scriptscriptstyle e^{\lambda_{\mu}\delta}+e^{-\lambda_{\mu}\delta} & \scriptscriptstyle -1 &\scriptscriptstyle 0\\
\scriptscriptstyle \vdots&  &\scriptscriptstyle \ddots &\scriptscriptstyle -1 & \scriptscriptstyle e^{\lambda_{\mu}\delta}+e^{-\lambda_{\mu}\delta} &\scriptscriptstyle -1\\
\scriptscriptstyle 0&\scriptscriptstyle  \cdots&\scriptscriptstyle \cdots &\scriptscriptstyle 0 &\scriptscriptstyle -1 &\scriptscriptstyle e^{\lambda_{\mu}\delta}\\
\end{array}
\scriptscriptstyle\right).
\end{eqnarray*} 
\end{lemme}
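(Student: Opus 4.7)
The plan is to exploit the Markov structure of the discretely sampled Ornstein--Uhlenbeck process rather than trying to invert the covariance matrix directly. From equation (4), the sequence $(\mu_k)_{k \geq 0}$ satisfies the AR$(1)$ recursion $\mu_{k+1} = \alpha \mu_k + v_k$ with $\alpha := e^{-\lambda_\mu \delta}$ and $v_k$ i.i.d.\ centered Gaussian of variance $\tau^2 := \frac{\sigma_\mu^2}{2\lambda_\mu}(1-\alpha^2)$. Combined with the initial condition $\mu_0 = 0$, this forces $\mu_1 \sim \mathcal{N}(0,\tau^2)$ and yields a clean factorization of the joint density of $(\mu_1,\ldots,\mu_N)$ as the marginal of $\mu_1$ times the $N-1$ transition densities. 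The idea is then to read off $\Sigma_{\mu_{1:N}|\theta}^{-1}$ directly from the quadratic form in the exponent of this joint density, bypassing any explicit matrix inversion.

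Concretely, the log-joint density is, up to an additive constant,
$$-\frac{1}{2\tau^2}\Big[\mu_1^2 + \sum_{k=1}^{N-1}(\mu_{k+1} - \alpha \mu_k)^2\Big].$$
Expanding the squares and collecting the coefficients of $\mu_i \mu_j$ gives a tridiagonal quadratic form whose diagonal entries are $(1+\alpha^2)/\tau^2$ for indices $1,\ldots,N-1$ and $1/\tau^2$ for index $N$, with off-diagonal entries $-\alpha/\tau^2$. This must be $\Sigma_{\mu_{1:N}|\theta}^{-1}$. It then remains only to check that this matrix equals $c\, B_N$ with $c = \frac{2\lambda_\mu}{\sigma_\mu^2(e^{\lambda_\mu\delta}-e^{-\lambda_\mu\delta})}$, which is elementary: multiplying numerator and denominator of $1/\tau^2$ by $e^{\lambda_\mu\delta}$ converts $1-\alpha^2 = 1-e^{-2\lambda_\mu\delta}$ into $e^{\lambda_\mu\delta}(e^{\lambda_\mu\delta}-e^{-\lambda_\mu\delta})$, after which the three kinds of entries $(1+\alpha^2)/\tau^2$, $-\alpha/\tau^2$, $1/\tau^2$ match $c(e^{\lambda_\mu\delta}+e^{-\lambda_\mu\delta})$, $-c$, and $c\,e^{\lambda_\mu\delta}$ respectively.

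The only conceptual point worth highlighting is the asymmetry between the first and last rows of $B_N$. Because $\mu_0 = 0$, the marginal precision of $\mu_1$ equals the one-step precision $1/\tau^2$, and this combines with the $k=1$ transition contribution $\alpha^2/\tau^2$ to produce the same bulk diagonal coefficient $(1+\alpha^2)/\tau^2$. By contrast $\mu_N$ appears only in the $k=N-1$ transition, so its diagonal coefficient stays at $1/\tau^2$, which is exactly why the last diagonal entry of $B_N$ is $e^{\lambda_\mu\delta}$ rather than $e^{\lambda_\mu\delta}+e^{-\lambda_\mu\delta}$. Had one instead assumed a stationary start $\mu_0 \sim \mathcal{N}(0,\sigma_\mu^2/(2\lambda_\mu))$, both corners of $B_N$ would be special. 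There is no real obstacle beyond this bookkeeping; an alternative route would verify $\Sigma_{\mu_{1:N}|\theta}\, B_N = c^{-1} I_N$ directly using the splitting $\Sigma_{ij} = \frac{\sigma_\mu^2}{2\lambda_\mu}(e^{-\lambda_\mu|i-j|\delta} - e^{-\lambda_\mu(i+j)\delta})$ into a Toeplitz part plus a rank-one correction, but the probabilistic factorization above is much cleaner.
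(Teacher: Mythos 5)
Your proof is correct, and it is worth noting that the paper does not actually prove this lemma at all: it simply refers the reader to \cite{OuMatriceInversion}, so your argument supplies a self-contained proof where the paper relies on an external citation. Your route is the standard Gauss--Markov one: write $\mu_{k+1}=\alpha\mu_k+v_k$ with $\alpha=e^{-\lambda_{\mu}\delta}$ and $\tau^2=\frac{\sigma_{\mu}^2}{2\lambda_{\mu}}(1-\alpha^2)$, factor the joint density of $(\mu_1,\dots,\mu_N)$ using the Markov property, and read the precision matrix off the quadratic form. Since $\mu_0=0$ forces $\mu_1\sim\mathcal{N}(0,\tau^2)$, the $(1,1)$ coefficient picks up the extra $\alpha^2/\tau^2$ while the $(N,N)$ coefficient does not, which is exactly the asymmetry between $e^{\lambda_{\mu}\delta}+e^{-\lambda_{\mu}\delta}$ and the corner entry $e^{\lambda_{\mu}\delta}$ in $B_N$. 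The scaling also checks out: $1/\tau^2=c\,e^{\lambda_{\mu}\delta}$ with $c=\frac{2\lambda_{\mu}}{\sigma_{\mu}^{2}(e^{\lambda_{\mu}\delta}-e^{-\lambda_{\mu}\delta})}$, hence $(1+\alpha^2)/\tau^2=c(e^{\lambda_{\mu}\delta}+e^{-\lambda_{\mu}\delta})$, $-\alpha/\tau^2=-c$ and $1/\tau^2=c\,e^{\lambda_{\mu}\delta}$, as you claim. One wording slip: multiplying numerator and denominator of $1/\tau^2$ by $e^{\lambda_{\mu}\delta}$ turns $1-e^{-2\lambda_{\mu}\delta}$ into $e^{\lambda_{\mu}\delta}-e^{-\lambda_{\mu}\delta}$, not into $e^{\lambda_{\mu}\delta}(e^{\lambda_{\mu}\delta}-e^{-\lambda_{\mu}\delta})$; your final identities are nevertheless the correct ones, so this is cosmetic. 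Compared with the alternative you mention in passing --- verifying $\Sigma_{\mu_{1:N}|\theta}B_N=c^{-1}I_N$ entrywise from the covariance in Equation (\ref{OuCov}) --- the probabilistic factorization is shorter, explains structurally why the inverse is tridiagonal, and makes transparent how the boundary entries would change under a stationary initialization, a point the citation-only treatment in the paper leaves opaque.
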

Therefore, the inverse of the matrix $\Sigma_{\mu_{1:N+1}}$ is given by:
\begin{eqnarray*}
\Sigma_{\mu_{1:N+1}|\theta}^{-1}=\left(
\begin{array}{cc}
  \frac{2\lambda_{\mu}\left(e^{\lambda_{\mu}\delta}+e^{-\lambda_{\mu}\delta} \right)}{\sigma_{\mu}^{2}\left(e^{\lambda_{\mu}\delta}-e^{-\lambda_{\mu}\delta} \right)} & \begin{array}{cccc} \frac{-2\lambda_{\mu}}{\sigma_{\mu}^{2}\left(e^{\lambda_{\mu}\delta}-e^{-\lambda_{\mu}\delta} \right) } & 0 & \cdots & 0  \\ \end{array}  \\
 \begin{array}{c} \frac{-2\lambda_{\mu}}{\sigma_{\mu}^{2}\left(e^{\lambda_{\mu}\delta}-e^{-\lambda_{\mu}\delta} \right) } \\ 0 \\ \vdots \\ 0  \\ \end{array}
 &\Sigma_{\mu_{1:N}|\theta}^{-1}
\end{array}
\right).
\end{eqnarray*} 
\paragraph{Procedure}
Finally, at time $t$, the inverse of the covariance matrix is given by the following protocol:
\begin{itemize}
\item Computation of the matrix $A_t^{-1}$ using $A_{t-1}^{-1}$.
\item Computation of the matrix $\Sigma_{\mu_{1:t}|\theta}^{-1}$ using $\Sigma_{\mu_{1:t-1}|\theta}^{-1}$.
\item Using $\Sigma_{y_{1:t}|\theta}^{-1}=\Sigma_{\mu_{1:t}|\theta}^{-1}-\Sigma_{\mu_{1:t}|\theta}^{-1}A_{t}^{-1}\Sigma_{\mu_{1:t}|\theta}^{-1}$, the matrix $\Sigma_{y_{1:t}|\theta}^{-1}$ is obtained.
\end{itemize}

\subsection*{Determinant of the covariance matrix}
The iterative computation of $\det\left( \Sigma_{y_{1:N}|\theta}\right)$ is based on the following lemma:
\begin{lemme}\label{lemma}
The determinant of the matrix $ \Sigma_{y_{1:N}|\theta}$ is given by:
\begin{eqnarray}\label{Determinant}
\det\left( \Sigma_{y_{1:N}|\theta} \right) = \frac{\det \left( I_N+\frac{\sigma_{S}^{2}}{\delta} \Sigma_{\mu_{1:N}|\theta}^{-1} \right) }{\det\left( \Sigma_{\mu_{1:N}|\theta}^{-1} \right) },
\end{eqnarray}
and for $N \geq 2$, we have:
 \begin{eqnarray*}
 \scriptstyle\det\left( \Sigma_{\mu_{1:N+1}|\theta}^{-1} \right)&=&\scriptstyle\ g\left(\lambda_{\mu},\sigma_{\mu} \right) \left(e^{\lambda_{\mu}\delta}+e^{-\lambda_{\mu}\delta} \right) \det\left( \Sigma_{\mu_{1:N}|\theta}^{-1} \right)
 \\&&\scriptstyle\ -g\left(\lambda_{\mu},\sigma_{\mu} \right)^{2} \det\left( \Sigma_{\mu_{1:N-1}|\theta}^{-1} \right),\\
 \scriptstyle\det \left( I_{N+1}+\frac{\sigma_{S}^{2}}{\delta} \Sigma_{\mu_{1:N+1}|\theta}^{-1} \right) &=&\scriptstyle\ 
 \left(1+\frac{\sigma_{S}^{2}}{\delta}g\left(\lambda_{\mu},\sigma_{\mu} \right) \left(e^{\lambda_{\mu}\delta}+e^{-\lambda_{\mu}\delta} \right)  \right) \det \left( I_{N}+\frac{\sigma_{S}^{2}}{\delta} \Sigma_{\mu_{1:N}|\theta}^{-1} \right)
 \\&&\scriptstyle\ -\left(\frac{\sigma_{S}^{2}}{\delta} g\left(\lambda_{\mu},\sigma_{\mu} \right)\right) ^{2} 
 \det \left( I_{N-1}+\frac{\sigma_{S}^{2}}{\delta} \Sigma_{\mu_{1:N-1}|\theta}^{-1} \right),
 \end{eqnarray*}
 where
 \begin{eqnarray*}
 g\left(\lambda_{\mu},\sigma_{\mu} \right) =\frac{2\lambda_{\mu}}{\sigma_{\mu}^{2}\left(e^{\lambda_{\mu}\delta}-e^{-\lambda_{\mu}\delta} \right)}.
 \end{eqnarray*}
\end{lemme}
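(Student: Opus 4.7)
The first identity is immediate from the additive decomposition $\Sigma_{y_{1:N}|\theta} = \Sigma_{\mu_{1:N}|\theta} + \frac{\sigma_S^2}{\delta} I_N$ given in Equation (\ref{Cov}). Factoring $\Sigma_{\mu_{1:N}|\theta}$ out on the left and using multiplicativity of the determinant gives
\[ \det(\Sigma_{y_{1:N}|\theta}) = \det(\Sigma_{\mu_{1:N}|\theta})\, \det\!\Bigl(I_N + \tfrac{\sigma_S^2}{\delta}\Sigma_{\mu_{1:N}|\theta}^{-1}\Bigr), \]
and substituting $\det(\Sigma_{\mu_{1:N}|\theta}) = 1/\det(\Sigma_{\mu_{1:N}|\theta}^{-1})$ yields Equation (\ref{Determinant}).

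For the two recurrences, the plan is to exploit the explicit tridiagonal form of $\Sigma_{\mu_{1:N+1}|\theta}^{-1}$ supplied by Lemma \ref{Lemma}. Writing $g=g(\lambda_\mu,\sigma_\mu)$, this matrix is symmetric with diagonal entries $g(e^{\lambda_\mu\delta}+e^{-\lambda_\mu\delta})$ (except for the very last diagonal entry, which equals $g\,e^{\lambda_\mu\delta}$) and sub- and super-diagonals equal to $-g$, all other entries being zero. The structural point driving the recurrence is the nested-block property: the bottom-right $N\times N$ principal submatrix of $\Sigma_{\mu_{1:N+1}|\theta}^{-1}$ is exactly $\Sigma_{\mu_{1:N}|\theta}^{-1}$, and iterating, the bottom-right $(N-1)\times(N-1)$ submatrix is $\Sigma_{\mu_{1:N-1}|\theta}^{-1}$. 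The anomalous entry $g\,e^{\lambda_\mu\delta}$ causes no trouble because it sits in the bottom-right corner and is left untouched when one strips rows and columns off the top. The same nested-block property therefore holds for $I_{N+1}+\frac{\sigma_S^2}{\delta}\Sigma_{\mu_{1:N+1}|\theta}^{-1}$ with the corresponding shifted matrices.

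The two identities then follow from the textbook three-term expansion for determinants of tridiagonal matrices. Concretely, I would Laplace-expand $\det(\Sigma_{\mu_{1:N+1}|\theta}^{-1})$ along its first row: only the $(1,1)$ entry $g(e^{\lambda_\mu\delta}+e^{-\lambda_\mu\delta})$ and the $(1,2)$ entry $-g$ are non-zero. The $(1,1)$-minor is $\Sigma_{\mu_{1:N}|\theta}^{-1}$ by the nested-block property. The $(1,2)$-minor has only one non-zero entry in its first column, namely $-g$ at the top (inherited from position $(2,1)$ of the original matrix); a further expansion along that column leaves the $(N-1)\times(N-1)$ determinant $\det(\Sigma_{\mu_{1:N-1}|\theta}^{-1})$, again by the nested-block property. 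Collecting the cofactor signs $(-1)^{1+1}$ and $(-1)^{1+2}$ with the two $-g$ factors produces
\[ \det(\Sigma_{\mu_{1:N+1}|\theta}^{-1}) = g(e^{\lambda_\mu\delta}+e^{-\lambda_\mu\delta})\det(\Sigma_{\mu_{1:N}|\theta}^{-1}) - g^2 \det(\Sigma_{\mu_{1:N-1}|\theta}^{-1}). \]
Running the same expansion verbatim on $I_{N+1}+\frac{\sigma_S^2}{\delta}\Sigma_{\mu_{1:N+1}|\theta}^{-1}$, whose diagonal entries now read $1+\frac{\sigma_S^2}{\delta}g(e^{\lambda_\mu\delta}+e^{-\lambda_\mu\delta})$ and whose off-diagonal entries read $-\frac{\sigma_S^2}{\delta}g$, produces the second recurrence with the coefficients stated in the lemma.

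The argument involves no analytic subtlety; the only (mild) obstacle is sign and index bookkeeping in the cofactor expansion, together with checking that the nested-block identification really does propagate to both smaller sizes. Base cases at $N=1,2$ needed to initialize the recurrence are direct computations from the explicit form in Lemma \ref{Lemma}, and together with the three-term recursion they determine $\det(\Sigma_{\mu_{1:N}|\theta}^{-1})$ and $\det(I_N+\frac{\sigma_S^2}{\delta}\Sigma_{\mu_{1:N}|\theta}^{-1})$ iteratively, which via the first identity gives $\det(\Sigma_{y_{1:N}|\theta})$ on-line.
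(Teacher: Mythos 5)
Your proof is correct and follows essentially the same route as the paper: the first identity comes from the decomposition $\Sigma_{y_{1:N}|\theta}=\Sigma_{\mu_{1:N}|\theta}+\frac{\sigma_{S}^{2}}{\delta}I_N$ together with multiplicativity of the determinant, and the recurrences come from the tridiagonal structure supplied by Lemma \ref{Lemma}. The paper's proof merely remarks that the tridiagonal form makes the recursive computation of the determinants possible, so your Laplace expansion and nested-block bookkeeping simply make explicit what the paper leaves to the reader.
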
 
\begin{proof}
The multiplication of Equation (\ref{Cov}) by $\Sigma_{\mu_{1:N}|\theta}^{-1}$ gives:
\begin{eqnarray*}
\Sigma_{\mu_{1:N}|\theta}^{-1} \Sigma_{y_{1:N}|\theta}&=&I_N+\frac{\sigma_{S}^{2}}{\delta} \Sigma_{\mu_{1:N}|\theta}^{-1}.
\end{eqnarray*} 
Equation (\ref{Determinant}) follows. Using Lemma \ref{Lemma}, The matrices \\$\left( I_N+\frac{\sigma_{S}^{2}}{\delta} \Sigma_{\mu_{1:N}|\theta}^{-1} \right)$ and $\Sigma_{\mu_{1:N}|\theta}^{-1}$ are tridiagonal. The recursive computation of their determinant is then possible.
\end{proof}

\newpage
\vspace{60pt}
\bibliographystyle{alpha}
\bibliography{bibli}
\end{document}